\documentclass{article}
\usepackage{authblk}
\usepackage[utf8]{inputenc}
\usepackage{graphicx}
\usepackage{amsfonts}
\usepackage{amsmath}
\usepackage{nccmath}
\usepackage{amssymb}
\usepackage{fancyhdr}
\usepackage{amsthm}
\usepackage{array}
\usepackage{systeme}
\usepackage{xcolor}
\usepackage{mathtools}

\usepackage{afterpage}
\usepackage{calligra}
\usepackage[T1]{fontenc}
\usepackage{esint}
\usepackage{amsmath}

\usepackage{enumerate}
\usepackage{geometry}
\usepackage{hyperref}
\usepackage{subfig}
\newcommand{\widesim}[2][1.5]{
  \mathrel{\overset{#2}{\scalebox{#1}[1]{$\sim$}}}
}
\usepackage{comment}
\usepackage{tikz}
\usepackage{scalerel}

\title{The Variance Gamma Process for Option Pricing}
\author[1]{Rohan Shenoy}
\author[2]{Peter Kempthorne}
\affil[1]{Department of Mathematics, Imperial College London, rohan.shenoy22@imperial.ac.uk}
\affil[2]{Department of Mathematics, Massachusetts Institute of Technology, kemp2@mit.edu}
\date{August 2024}

\begin{document}

\maketitle

\begin{abstract}
    This paper explores the concept of random-time subordination in modelling stock-price dynamics, and 
We first present results on the Laplace distribution as a Gaussian variance-mixture, in particular a more efficient volatility estimation procedure through the absolute moments. We generalise the Laplace model to characterise the powerful variance gamma model of Madan et al \cite{MadanSeneta} as a Gamma time-subordinated Brownian motion to price European call options via an Esscher transform method. We find that the Variance Gamma model is able to empirically explain excess kurtosis found in log-returns data, rejecting a Black-Scholes assumption in a hypothesis test.\\

\noindent \textbf{Keywords.} Laplace distribution; Gaussian variance-mixture; time-subordinated models; stochastic time-change; variance-gamma process; Esscher transform.
\end{abstract}

\tableofcontents

\newpage

\newtheorem{thm}{Theorem}
\newtheorem{prop}{Proposition}
\newtheorem{lemma}{Lemma}
\newtheorem{definition}{Definition}
\newtheorem{remark}{Remark}

\section{Introduction}\label{Section1}
Under many simple market models, one assumes the returns from an asset follow a log-Normal distribution - the celebrated Black-Scholes option pricing mechanism \cite{blackscholes} is borne out of this assumption for example.\footnote{"Ideal condition" b) of the Black-Scholes formulation: \textit{`The stock price follows a random walk in continuous time with a variance proportional to the square of the stock price. Thus the distribution of possible stock prices at the end of any finite interval is log-Normal. The variance rate of the return on the stock is constant.'}} The literature for Normal (Gaussian) distributions is extensive which has enabled well-grounded research into analytical solutions and closed-form formulas to different problems when an underlying Gaussian assumption is appropriate (often when large sample settings are being considered and the Central Limit Theorem applies). 

However, there are many instances where one cannot reasonably assume that underlying distribution is Gaussian and indeed, in the case of daily log returns data the assumption appears inappropriate.
\begin{figure}[h!]
    \centering
    \subfloat[\centering]{{\includegraphics[width=7cm]{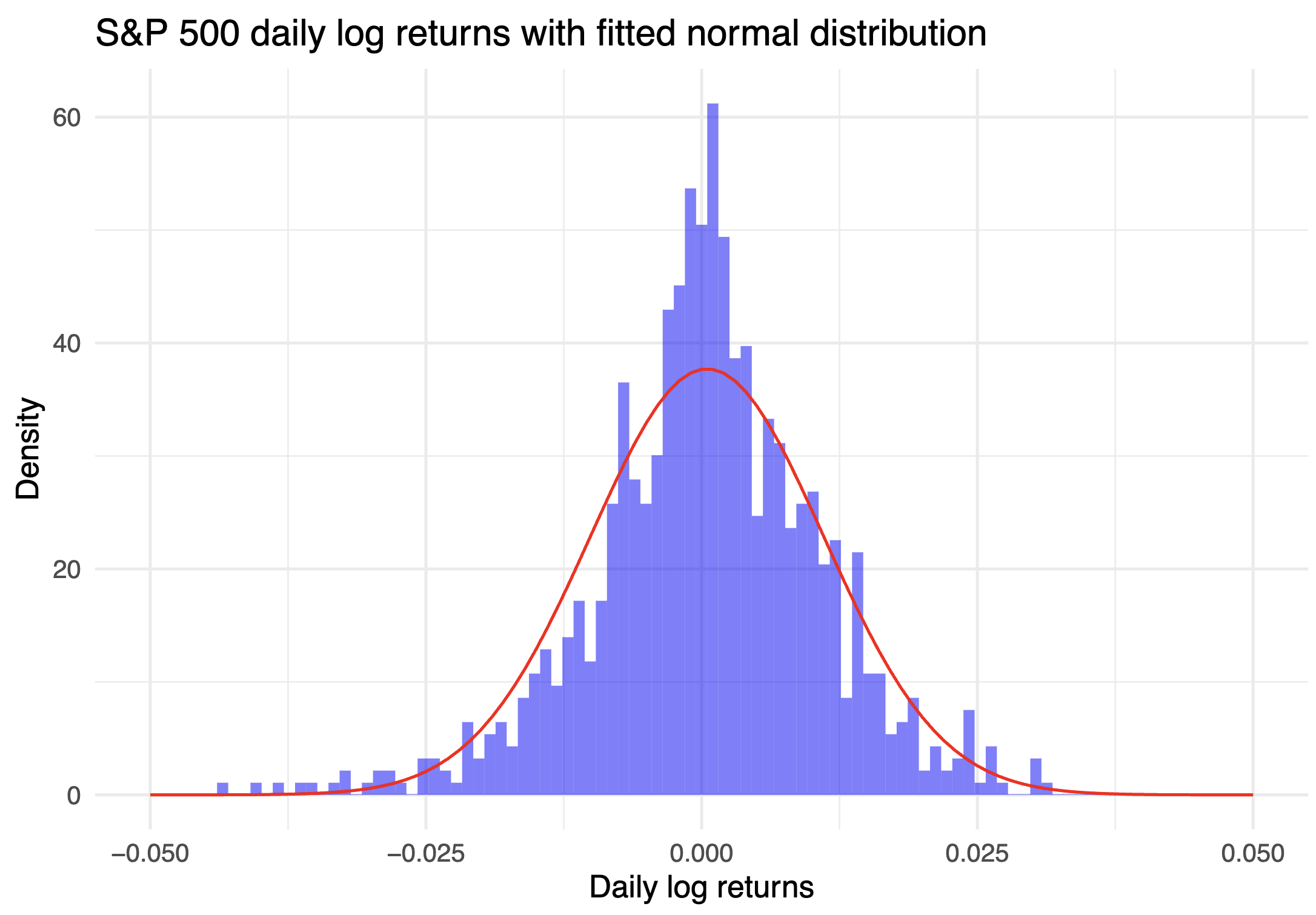} }}
    \qquad
    \subfloat[\centering]{{\includegraphics[width=7cm]{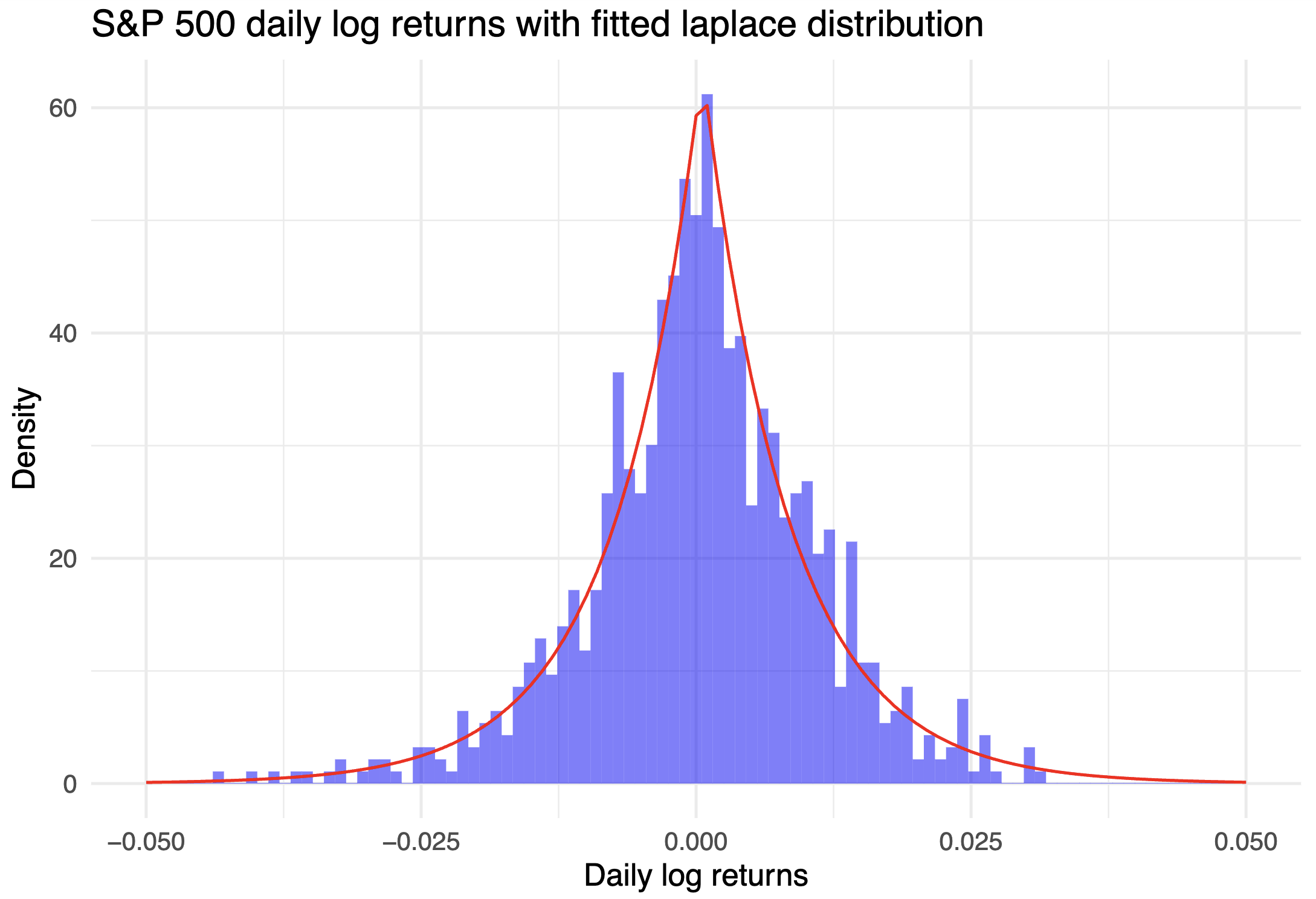} }}
    \caption{Fitted Gaussian vs fitted Laplace for daily log returns from the S$\&$P500 January 2022 to January 2024}
    \label{fig:example}
\end{figure}

The distribution of log returns from the S$\&$P500 index from January 2022 to January 2024 appears to exhibit greater kurtosis than that admitted by a Gaussian. There is a visible difference - the log returns data has heavier tails and a sharper peak at $0$ compared to the fitted Gaussian. Many authors have proposed alternate distributions (such as the Student-t \cite{Praetz}, or Cauchy \cite{KlebanerLandsman}) to the Gaussian to account for these observations.

Here, we first propose the Laplace distribution as an alternative, related to the Gaussian through the generalised Gaussian family (GGD). This GGD family includes an additional shape parameter $\beta$ so that $X\sim GGD(\mu,\sigma,\beta)$ has density
\begin{equation}
    f_X(x) = {\frac {\beta }{2\sigma \Gamma (1/\beta )}}\;e^{-(|x-\mu |/\sigma )^{\beta }}.
\end{equation}
For $\beta = 2$ one recovers the Gaussian family, but $\beta = 1$ gives rise to the Laplace family - a family symmetric about $\mu$, exhibiting sharper peaks and heavier tails which better matches the log returns data. This initially suggests a possible need for alternative estimators of distribution parameters. With a Gaussian assumption, the distribution variance is estimated with the sample variance, the maximum-likelihood estimator with its square root measuring the realized volatility. We will see that sample mean absolute deviation from the sample median provides a more precise volatility estimator when the return distribution satisfies a Laplace assumption.

However, the Laplace distribution may also be expressed as a Gaussian with stochastic variance, and it is this characterisation as a Gaussian variance-mixture which we seek to explore and draw further inference from. Modelling the evolution of the stock price as a geometric Lévy process motivates a generalisation from the Laplace family to the Variance-Gamma family. We explore the Variance Gamma process as a time-subordinated Brownian motion composed as a Brownian motion with Gamma-distributed time increments, as considered in Madan and Milne \cite{MadanMilne}.

It is common to use market-day counts to index price observations - this assumes unit time increments over all successive price observations. Alternatively, the time-subordination characterisation gives weight to the possibility that we can improve on the use of market-day counts, instead implementing a stochastic index. Many authors have proposed possible alternative deterministic indices to market-day counts, such as trading volume or volume of profits; however, it is the works of Madan and Milne (1991) \cite{MadanMilne} and Madan, Carr and Chang (1998) \cite{MadanCarrChang} which have highlighted the applications of a stochastic alternative of this form. 

Under such a model, one infers that the economically relevant time in a market is itself a stochastic process, independent of other random processes affecting market dynamics\footnote{In their paper of 1990, Madan and Seneta comment on this idea, stating: \textit{More informally, one may think of $G(t)$ [the stochastic index] as a formal statement of the remark, "Didn't have much of a year this year," by allowing for an interpretation of how much of a year one actually had.}}. We study the mathematical model of log returns when the time increments follow a Gamma process and the log price increments are attributed to Brownian motion on the stochastic time scale. 

Geometric Brownian motion can be viewed as a sub-model of the Variance Gamma model (where there is no time-subordination) which allows us to observe the benefits of implementing a time-subordination. The Variance Gamma's additional parameters can account for the excess kurtosis observed when modelling the log-return distribution of the S$\&$P500 Index, improving on the Gaussian in this instance. However, we also consider the European option price under a general Variance Gamma model, and compare the pricing performance to that of the Black-Scholes benchmark. It is well-documented that the Black-Scholes formulation leads to known biases - notably the existence of an implied volatility smile upon estimating the volatility parameter. This is often attributed to the observation that log-returns have a risk neutral density with kurtosis above that of a Gaussian; this discrepancy leads to the presence of volatility smiles in the pricing of options under Black-Scholes \cite{BoyleMcDougall}, \cite{PenaRubioSerna}. With the ability to obtain excess kurtosis over the Gaussian, the Variance Gamma density can be expected to correct these biases and propose a causal explanation. Additionally, while the risk-neutral under Black-Scholes model loses the drift term, the Variance Gamma risk neutral density preserves the drift term - it appears the inclusion of this drift term provides significant improvement for option pricing models.

We refer to the work of Madan, Carr and Chang \cite{MadanCarrChang}, who test the performance of both the statistical and risk-neutral $VG$ processes on S$\&$P500 market data, rejecting their respective geometric Brownian motion and Black-Scholes null hypotheses in favour of the $VG$ in the data. The authors are able to demonstrate that the $VG$ model is capable of correcting moneyness biases in pricing errors.

The paper follows the given outline. In section \ref{Section2} we explore the Laplace distribution as an alternative to the Gaussian for fitting daily log returns. We then generalise the Laplace and consider the Variance Gamma distribution in section \ref{Section3}, as well as the corresponding Variance Gamma process. In section 4 we explore the European option pricing mechanism under the Variance Gamma model, obtaining a risk-neutral measure from an Esscher transform method. Section 5 deals with observing the differences between the more general $VG$ model and the nested Black-Scholes model. Section 6 concludes the results and the Appendix details the proofs deferred from the main sections.

\section{The Laplace Distribution}\label{Section2}
\subsection{Initial characterisation}
We first explore the properties of the Laplace distribution as an alternative to the Gaussian. 
\begin{definition}
    A random variable $X$ is said to have a Classical Laplace distribution $CL(\theta, s)$ with mean $\theta\in\mathbb{R}$ and scale $s>0$ if it has real support with density
\begin{equation}
    f_X(x;\theta,s) = \frac{1}{2s}e^{-\frac{|x-\theta|}{s}},\hspace{0.5cm}x\in\mathbb{R}.
\end{equation}
We refer to a standard Classical Laplace random variable for $\theta=0,s=1$.
\end{definition} The graph of this density can be interpreted as a graph of two exponential densities, one for positive deviations and one for negative deviations from the mean $\theta$, both with rate $1/s$ (normalized to integrate to $1$). One computes the characteristic function of a Laplace random variable $X$ from elementary integration (integrating on semi-infinite intervals about $\theta$).
\begin{equation}
    \psi_X(t) = \frac{e^{it\theta}}{1+s^2t^2}, \hspace{0.5cm} t\in\mathbb{R}.
\end{equation}
We can similarly find the moment generating function and it will be useful to write down the Taylor expansion for $\theta=0$ to obtain the central moments of the distribution.
\begin{equation}\label{centralmoments}
    M_X(t) = \dfrac{1}{1-s^2t^2} = \sum_{k=0}^\infty s^{2k}(2k)!\frac{t^{2k}}{(2k)!}, \hspace{0.25cm} -\frac{1}{s} < t < \frac{1}{s}\implies \mathbb{E}\left((X-\theta)^n\right)= 
    \begin{cases}
    0 &\text{$n$ odd}\\
    s^nn! &\text{$n$ even.}
    \end{cases}
\end{equation}
One could alternatively note that the distribution is symmetric so the
odd central moments are $0$, and the absolute value of an $CL(0,s)$ variable is an $\text{Exp}(1/s)$ variable so that the even central moments coincide. So we also find
\begin{equation}\label{centralabsolutemoment}
    \mathbb{E}\left[|X-\theta|^n\right] = s^nn!.
\end{equation}
We may now immediately compute the kurtosis,
\begin{equation}
    \text{Kurt}(X) := \dfrac{\mathbb{E}\left[\left(X-\theta\right)^4\right]}{\left(\mathbb{E}\left[\left(X-\theta\right)^2\right]\right)^2} = \frac{24s^4}{4s^4} = 6.
\end{equation}
\begin{remark}
    The excess kurtosis of a distribution is defined as the additional kurtosis over that of a Gaussian ($=3$) so in this case the excess kurtosis is $3$. For a probability distribution, an excess kurtosis is attributed to greater weight in the tails of the distribution.
\end{remark}

\subsection{Parameter estimation}
We first construct the likelihood function based on a random sample $X_1,\dots,X_n\widesim{i.i.d}CL(\theta,s)$,
\begin{equation}
    \mathcal{L} = f(X_1,\dots,X_n; \theta, s) = \prod_{i=1}^nf(X_i, \theta, s) = \left(\frac{1}{2s}\right)^n\exp\left(-\frac{1}{s}\sum_{i=1}^n|X_i-\theta|\right)
\end{equation}
and obtain the log-likelihood,
\begin{equation}
    \log\mathcal{L} = l(X_1,\dots,X_n; \theta, s) = -n\log(2) -n\log(s) - \frac{1}{s}\sum_{i=1}^n|X_i-\theta|.
\end{equation}
It is evident that we can minimise the sum for any fixed $s$ to find $\hat{\theta}$ and then maximise the expression with respect to $s$ with $\theta$ fixed at $\hat{\theta}$. We begin with minimising the sum to find $\hat{\theta}$.
\begin{lemma}\label{samplemedian}
    For a sample $X_1,\dots,X_n$, the sum $s(\theta) = \sum_{i=1}^n|X_i-\theta|$ is minimised by the sample median. 
\end{lemma}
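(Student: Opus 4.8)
The plan is to prove that the sample median minimizes $s(\theta) = \sum_{i=1}^n |X_i - \theta|$. The key obstacle is that the objective function is not differentiable at the sample points $X_i$, so a naive "set the derivative to zero" approach needs care. My plan is to use the fact that $s(\theta)$ is a piecewise-linear, continuous, convex function of $\theta$, and to analyze its slope on each interval between consecutive order statistics.

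First I would assume without loss of generality that the sample is sorted, writing the order statistics as $X_{(1)} \le X_{(2)} \le \cdots \le X_{(n)}$, since the sum is invariant under permutation of the data. I would then observe that $s(\theta)$ is continuous and piecewise linear, with possible kinks only at the points $\theta = X_{(i)}$. On any open interval strictly between two consecutive order statistics, $s(\theta)$ is differentiable, and its derivative is
\begin{equation}
    s'(\theta) = \sum_{i=1}^n \operatorname{sgn}(\theta - X_i) = \#\{i : X_i < \theta\} - \#\{i : X_i > \theta\}.
\end{equation}
The strategy is then to track the sign of this derivative as $\theta$ increases across the real line: far to the left all terms contribute $-1$ so $s'(\theta) = -n < 0$, and each time $\theta$ crosses an order statistic the derivative increases by $2$. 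Thus $s$ is decreasing while fewer than half the points lie below $\theta$ and increasing once more than half do, which pins the minimizer to the median.

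Next I would split into the two standard cases. For $n$ odd, the derivative changes sign from negative to positive exactly as $\theta$ passes the middle order statistic $X_{((n+1)/2)}$, so the unique minimizer is that median. For $n$ even, on the entire open interval $(X_{(n/2)}, X_{(n/2+1)})$ the derivative is $s'(\theta) = (n/2) - (n/2) = 0$, so $s$ is constant there and any point in the closed interval $[X_{(n/2)}, X_{(n/2+1)}]$ — in particular the sample median defined as the average of the two central order statistics — achieves the minimum. Convexity, which follows because $s$ is a sum of convex functions $|X_i - \theta|$, guarantees that this local analysis identifies the global minimum and that the set of minimizers is exactly this interval.

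The main subtlety to handle carefully is the behaviour at the kinks and the non-uniqueness in the even case; I expect the cleanest exposition avoids one-sided derivatives at the kinks and instead argues directly from the monotone sign pattern of the slope on the smooth pieces, together with continuity of $s$. An alternative, perhaps slicker, approach would be to pair up the terms symmetrically: for the sorted data one can bound $|X_{(i)} - \theta| + |X_{(n+1-i)} - \theta| \ge X_{(n+1-i)} - X_{(i)}$ by the triangle inequality, with equality precisely when $\theta$ lies between the paired order statistics, and summing these pairwise bounds shows any median attains the common lower bound. I would likely present the derivative/slope argument as the main proof since it also cleanly characterizes the full minimizing set.
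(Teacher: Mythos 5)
Your proposal is correct and follows essentially the same route as the paper: order the sample, observe that $s(\theta)$ is piecewise linear between consecutive order statistics with slope $\#\{i : X_i < \theta\} - \#\{i : X_i > \theta\}$ (the paper writes this as $2j-n$ on the $j$-th interval), and read off that the slope changes sign at the middle order statistic for odd $n$ and vanishes on the central interval for even $n$. Your additional remarks on convexity and the pairwise triangle-inequality bound are sound but not needed beyond what the paper's slope-tracking argument already establishes.
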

The proof follows upon ordering the realisations $x_i$ left to right as $x_{k_i}$ and considering the value of the sum when $\theta$ lies in different intervals $[x_{k_i},x_{k_{i+1}}]$. For an odd sample size, the minimising $\theta$ is unique, while for an even sample size this $\theta$ can lie within the closed interval between the middle two values (typically taking the midpoint of these).

We 
conclude that $\hat{\theta}$ is given by the sample median. The variance of this estimator cannot be computed easily, but is given explicitly in Asrabadi \cite{Asrabadi01011985} (eq. 8) where the midpoint median is used for even sample sizes. With the explicit variances, we can generate a plot in Figure (\ref{varfigmed}) of the variances proportional to $s^2/n$. The difference in calculation between odd and even sample sizes causes even sample sizes to be estimated more precisely than closeby odd sample sizes.
\begin{figure}[h!]
    \centering
    \includegraphics[width=0.5\linewidth]{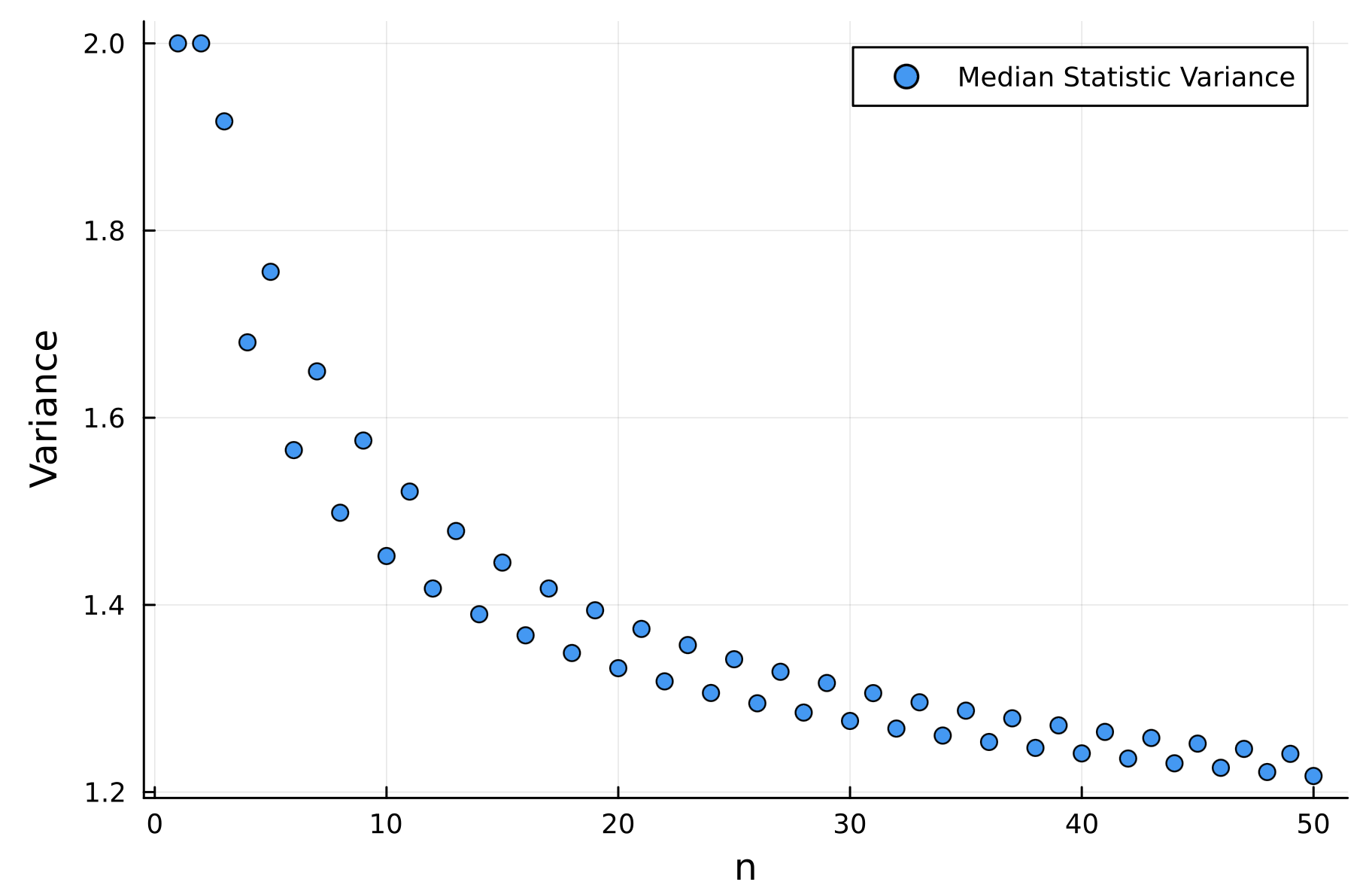}
    \caption{Variance of Laplace sample median proportional to $s^2/n$}
    \label{varfigmed}
\end{figure}
It is shown in Prop 2.6.2 of Kotz et al. \cite{Laplacebook} that the estimator $\hat{\theta}$ is unbiased, consistent and asymptotically Gaussian with $\sqrt{n}(\hat{\theta}-\theta)\overset{d}{\to}N\left(0,s^2\right)$. 

Having found $\hat{\theta}$, we can now maximise the log-likelihood with respect to $s$.

\begin{lemma}\label{thetaestimate}
    Assuming $\theta$ is known, $\hat{s}$ is given by the sample mean absolute deviation from $\theta$ .
    \begin{equation}\label{sigmahat}
        \hat{s} = \frac{1}{n}\sum_{i=1}^n|X_i-\theta|.
    \end{equation}
    Further, the variance of this estimator is equal to $\frac{s^2}{n}$, the Cramér-Rao lower bound for the variance.
\end{lemma}
The proof follows from the standard method, partially differentiating the log-likelihood w.r.t $s$ to obtain $\hat{s}$ and then differentiating again to verify the Cramér-Rao lower bound. We leave the full calculation to the Appendix \ref{thetaestimateappendix} but summarise the findings below.
\begin{prop}
    The maximum likelihood estimator of a random sample from a Classical Laplace $CL(\theta,s)$ variable has an MLE pair given by $\hat{\theta}$, the sample median and $\hat{s} = \frac{1}{n}\sum_{i=1}^n|x_i-\hat{\theta}|$.
\end{prop}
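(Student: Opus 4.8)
The plan is to derive this as a direct consequence of the two preceding lemmas through a profile-likelihood argument, exploiting the additive structure of $\log\mathcal{L}$. The crucial observation is that $\theta$ enters the log-likelihood only through the single term $-\tfrac{1}{s}\sum_{i=1}^n|X_i-\theta|$, and that for every fixed $s>0$ the coefficient $-1/s$ is strictly negative. Hence, for any fixed $s$, maximising $\log\mathcal{L}$ over $\theta$ is equivalent to minimising the sum $\sum_{i=1}^n|X_i-\theta|$, a problem that does not involve $s$ at all.

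First I would fix $s>0$ arbitrarily and apply Lemma \ref{samplemedian} to conclude that the inner maximiser over $\theta$ is the sample median $\hat{\theta}$, independently of the chosen $s$. Substituting $\hat{\theta}$ back into $\log\mathcal{L}$ produces the profile log-likelihood $-n\log 2 - n\log s - \tfrac{1}{s}\sum_{i=1}^n|X_i-\hat{\theta}|$, now a function of $s$ alone. I would then invoke Lemma \ref{thetaestimate} — reading its statement with the known centre $\theta$ replaced by the data-determined value $\hat{\theta}$ — to conclude that this profile is maximised at $\hat{s} = \tfrac{1}{n}\sum_{i=1}^n|X_i-\hat{\theta}|$.

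The main point to justify carefully, and the only real obstacle, is that this two-stage optimisation genuinely recovers the joint maximiser rather than merely a coordinatewise stationary point. This decoupling is legitimate precisely because the inner optimiser $\hat{\theta}$ is the same for every value of $s$: writing $\sup_{\theta,s}\log\mathcal{L} = \sup_s\big(\sup_\theta \log\mathcal{L}\big)$ and noting that the inner supremum is attained at the $s$-independent point $\hat{\theta}$, the outer maximisation in $s$ is then exactly the maximisation of the profile above, whose maximiser Lemma \ref{thetaestimate} supplies. I would also remark on the even-sample-size case, where Lemma \ref{samplemedian} only pins $\hat{\theta}$ down to the closed interval between the two central order statistics; any selection from this interval (conventionally the midpoint) yields a valid MLE, so the pair $(\hat{\theta},\hat{s})$ is a maximiser though not necessarily unique.
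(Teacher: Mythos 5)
Your proposal is correct and follows essentially the same route as the paper, which explicitly sets up the two-stage optimisation ("minimise the sum for any fixed $s$ to find $\hat{\theta}$ and then maximise the expression with respect to $s$ with $\theta$ fixed at $\hat{\theta}$") and presents the proposition as a summary of Lemmas \ref{samplemedian} and \ref{thetaestimate}. Your explicit justification that the decoupling yields the joint maximiser, because the inner maximiser is independent of $s$, is a welcome tightening of a step the paper treats as evident.
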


It can be shown (as in Thm 2.6.1 of Kotz et al. \cite{Laplacebook} I) that the pair of estimators $(\hat{\theta},\hat{s})$ for $(\theta,s)$ is consistent, asymptotically Gaussian and efficient (achieving their respective Cramér-Rao lower bounds) with an asymptotic covariance matrix,
\begin{equation}
    \boldsymbol{\Sigma} = \begin{pmatrix}
        s^2 & 0 \\ 0 & s^2
    \end{pmatrix}.
\end{equation} 

If we model log-returns with a Laplace distribution, we should ideally choose this pair of estimators. However, one typically uses a moment estimation method (estimating the sample mean and variance) from the data - these are motivated by their optimality when the log-return distribution is assumed to be Gaussian.

Suppose we instead assume a Laplace distribution but choose to estimate the parameters through moment estimation. We  expect a lack of efficiency and this is indeed case. First, consider estimating $\theta$ by the sample mean.
\begin{lemma}
    Given a sample $X_1,\dots,X_n$, the estimator for $\theta$ given by the sample mean $\Tilde{\theta}= \frac{1}{n}\sum_{i=1}^nX_i$ is unbiased, consistent and asymptotically Gaussian with $\sqrt{n}(\Tilde{\theta}-\theta) \overset{d}{\to}N\left(0,2s^2\right).$
\end{lemma}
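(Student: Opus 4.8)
The plan is to recognise this as a direct consequence of three standard facts about sample means of i.i.d.\ data with finite variance, so that the only genuine computation is the variance of a single $CL(\theta,s)$ variate. First I would record that a Laplace variable has finite variance: from the central-moment formula \eqref{centralmoments} with $n=2$ we read off $\mathrm{Var}(X_i) = \mathbb{E}\left[(X_i-\theta)^2\right] = s^2\cdot 2! = 2s^2$. Finiteness of this variance (and indeed of all moments, since the MGF $M_X(t)$ exists on $(-1/s,\,1/s)$) is the hypothesis that licenses the law of large numbers and the central limit theorem invoked below.

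For \emph{unbiasedness}, linearity of expectation gives $\mathbb{E}[\Tilde{\theta}] = \frac{1}{n}\sum_{i=1}^n \mathbb{E}[X_i] = \theta$ exactly, for every $n$. For \emph{consistency}, I would invoke the weak law of large numbers: since the $X_i$ are i.i.d.\ with finite mean $\theta$, the sample mean converges in probability, $\Tilde{\theta}\overset{p}{\to}\theta$. Alternatively, unbiasedness together with $\mathrm{Var}(\Tilde{\theta}) = 2s^2/n \to 0$ yields consistency immediately via Chebyshev's inequality.

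For \emph{asymptotic normality}, I would apply the classical Lindeberg--Lévy central limit theorem to the i.i.d.\ sequence $(X_i)$ with mean $\theta$ and variance $2s^2$, obtaining $\sqrt{n}(\Tilde{\theta}-\theta) = \frac{1}{\sqrt{n}}\sum_{i=1}^n (X_i-\theta) \overset{d}{\to} N\!\left(0,\,2s^2\right)$, which is precisely the stated limit.

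There is no substantive obstacle here: each step is a textbook invocation once the variance $2s^2$ is in hand, and the only point requiring any care is confirming the finite-variance hypothesis of the CLT, which is immediate from the moment formula already derived. It is worth contrasting the result with the asymptotic variance $s^2$ obtained earlier for the MLE $\hat{\theta}$ (the sample median): the sample mean is asymptotically normal but with exactly twice the variance, quantifying the loss of efficiency incurred by moment estimation under a Laplace assumption.
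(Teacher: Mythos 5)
Your proof is correct and follows essentially the same route as the paper: unbiasedness by linearity of expectation, consistency by the law of large numbers, and asymptotic normality by the classical CLT, with the variance $2s^2$ read off from the central-moment formula (\ref{centralmoments}). The extra detail you supply (the Chebyshev alternative for consistency and the explicit efficiency comparison with the median) is consistent with the paper's surrounding discussion.
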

\begin{proof}
    The estimator being unbiased follows immediately from the linearity of expectation. Consistency and asymptotic normality are given by the strong law of large numbers and central limit theorem respectively (noting from (\ref{centralmoments}) that the variance of each $X_i$ is indeed $2s^2$).
\end{proof}
We immediately note the relative asymptotic efficiency - the asymptotic variance of the sample mean is twice that of the asymptotic variance of the sample median under the Laplace assumption. Further the variance of the sample mean for finite samples is $2s^2/n$, and while there is no simple closed form for $\hat{\theta}$, the plot above in Figure (\ref{varfigmed}) does confirm that the variance of $\hat{\theta}$ is always less than $\Tilde{\theta}$ for sufficiently large samples.

We now consider the scaled sample standard deviation estimate for $s$.  
\begin{lemma}\label{sigmaestimate}
    The estimator for $s$ given by,
    \begin{equation}
        \Tilde{s} = \sqrt{\frac{1}{2n}\sum_{i=1}^nX_i}
    \end{equation}
    is consistent and asymptotically Gaussian with $\sqrt{n}(\Tilde{s}-s)\overset{d}{\to}N\left(0, \frac{5}{4}s^2\right)$.
\end{lemma}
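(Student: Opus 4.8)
The plan is to recognise $\tilde{s}$ as a smooth function of the sample second moment and then stitch together the strong law, the central limit theorem, and the delta method. Working in the centred case $\theta=0$ (in which $\tilde s=\sqrt{\tfrac{1}{2n}\sum_{i=1}^n X_i^2}$ is exactly the scaled sample standard deviation discussed above), I would set $\bar{M}_n := \frac{1}{n}\sum_{i=1}^n X_i^2$, so that $\tilde{s}=\sqrt{\bar M_n/2}$. By the even-moment formula in (\ref{centralmoments}) we have $\mathbb{E}[X_i^2]=2s^2$, so the strong law of large numbers gives $\bar M_n \to 2s^2$ almost surely; since $y\mapsto\sqrt{y/2}$ is continuous, the continuous mapping theorem yields $\tilde{s}\to\sqrt{2s^2/2}=s$, establishing consistency.

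For the asymptotic distribution I would first compute the variance of the summands $X_i^2$. Reading $\mathbb{E}[X^2]=2s^2$ and $\mathbb{E}[X^4]=s^4\cdot 4!=24s^4$ off (\ref{centralmoments}), we get $\operatorname{Var}(X^2)=24s^4-(2s^2)^2=20s^4$. The classical central limit theorem then delivers
\begin{equation}
    \sqrt{n}\left(\bar{M}_n - 2s^2\right) \overset{d}{\to} N\left(0,\, 20s^4\right).
\end{equation}

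The final step is the delta method applied to $g(y)=\sqrt{y/2}$, which is differentiable at $y_0=2s^2$ with $g'(y)=\tfrac{1}{2\sqrt{2y}}$, hence $g'(2s^2)=\tfrac{1}{4s}$. Therefore
\begin{equation}
    \sqrt{n}\,(\tilde{s}-s)=\sqrt{n}\left(g(\bar{M}_n)-g(2s^2)\right) \overset{d}{\to} N\left(0,\, \left(\tfrac{1}{4s}\right)^2\cdot 20s^4\right)=N\left(0,\, \tfrac{5}{4}s^2\right),
\end{equation}
as claimed.

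I do not anticipate a deep obstacle here; the argument is a standard CLT-plus-delta-method pipeline. The only points needing care are the bookkeeping of the fourth moment (using $\mathbb{E}[X^4]=24s^4$ directly, rather than a kurtosis-normalised quantity) and the correct evaluation of $g'$ at the mean, since any slip there would propagate straight into the asymptotic variance constant $5/4$. It is also worth noting explicitly that the Laplace law has a finite fourth moment, which is precisely what legitimises both the CLT applied to $X_i^2$ and the delta-method step.
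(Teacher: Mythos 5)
Your proof is correct and follows essentially the same route as the paper's: strong law plus continuous mapping for consistency, then the CLT applied to the sample second moment followed by the delta method with $g(y)=\sqrt{y/2}$. If anything you are slightly more careful than the paper's own write-up, which omits the square on $g'(2\sigma^2)$ in its displayed asymptotic variance (a typo), whereas you correctly compute $\bigl(g'(2s^2)\bigr)^2\cdot 20s^4=\tfrac{5}{4}s^2$ and also silently repair the missing square on $X_i$ in the lemma's statement of $\Tilde{s}$.
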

Consistency follows from the strong law of large numbers while the asymptotic normality follows upon recalling the Laplace central moments from (\ref{centralmoments}) and applying the delta method - the full calculation can be found in the Appendix \ref{sigmaestimateappendix}.

\begin{remark}
    It is important to note that, unlike for the Gaussian, $s$ does not represent the standard deviation of the Laplace distribution (rather the standard deviation of the Laplace is $\sqrt{2}s$). For the Laplace, $s$ represents a scale parameter - we consider estimating $s$ instead of the standard deviation directly to simplify calculations.
\end{remark}

Again note the asymptotic relative efficiency - $\Tilde{s}$ has $5/4$ times greater asymptotic variance than $\hat{s}$ under the Laplace assumption. For the variance of finite samples, there is no simple closed form for $\Tilde{s}$ (as the square root presents complexity in calculations), but the Monte Carlo estimates in Figure (\ref{Laplacesvsim}) do confirm that the variance of $\hat{s}$ is less than $\Tilde{s}$ for suitably large sample sizes \footnote{For smaller sample sizes, the median is more susceptible to leverage from outliers in comparison to the mean, contrasting to the reverse effect for larger sample sizes}.
\begin{figure}[h!]
    \centering
    \includegraphics[width=0.5\linewidth]{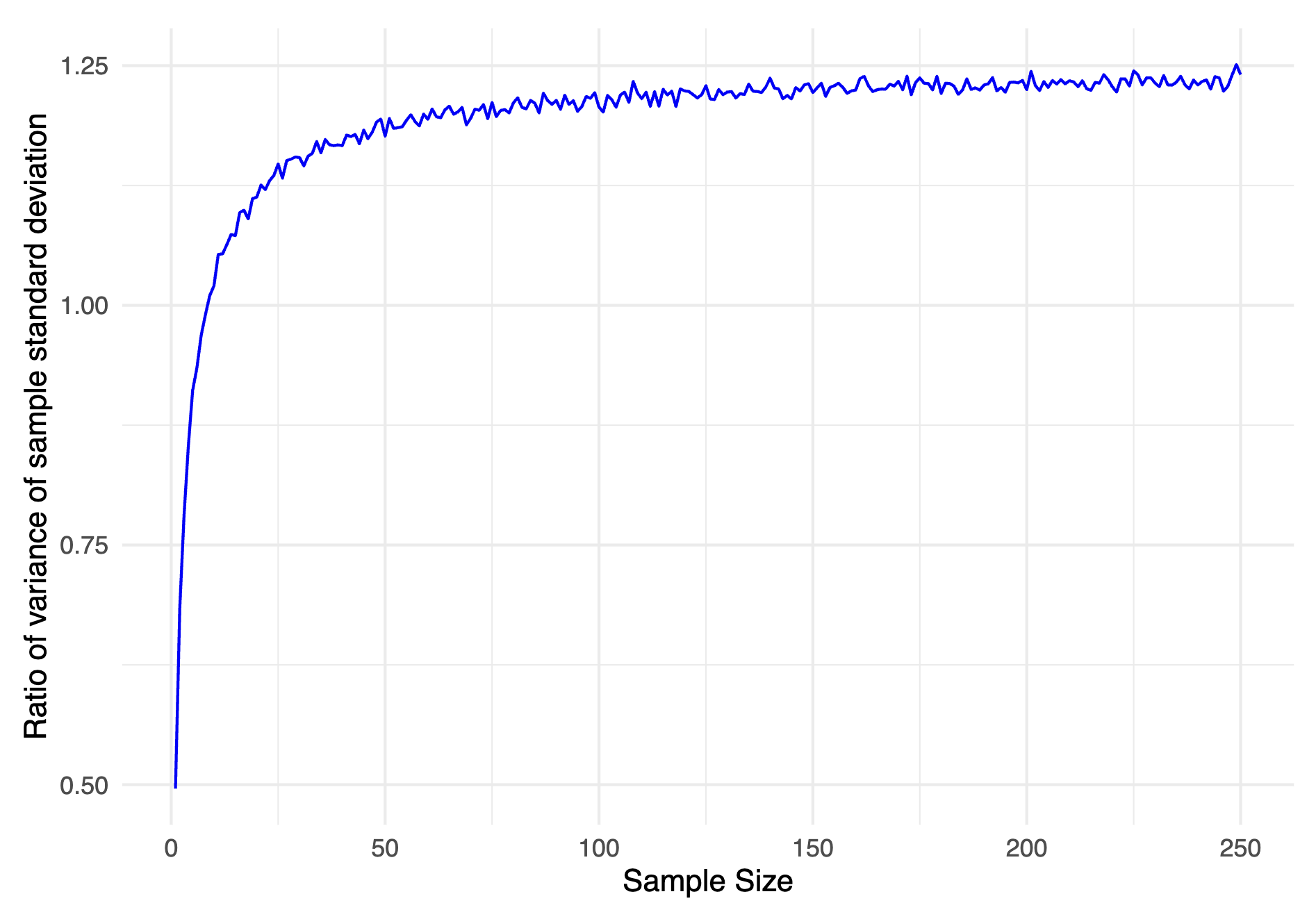}
    \caption{Monte Carlo estimates of the relative efficiency between estimators for the scale parameter $s$}
    \label{Laplacesvsim}
\end{figure}

\begin{remark}\label{remarkestimate}
    If we model the log-return distribution as Laplace instead of Gaussian, we should ideally consider different parameter estimators for our centre and spread of the underlying distribution. Importantly, if we estimate the log-return volatility and assume Laplace, the sample median absolute deviation presents a more efficient estimator for $s$ (which governs the volatility) than the scaled sample standard deviation and represents a more precise statistic for inference and decision analysis (e.g. option pricing) depending on the true parameter (e.g. volatility) of the return distribution.
\end{remark}

\subsection{Laplace distribution as a Gaussian variance-mixture}
We have until now presented the Laplace distribution as an alternative to the Gaussian; however, the distributions are related beyond the generalised Gaussian family. We now explore the Gaussian variance-mixture characterisation from 2.2.1 of Kotz et al. \cite{Laplacebook}.

Suppose that for some known and fixed $\theta$ and $\sigma^2$, $X$ is Gaussian with some mean $\theta$  but with an independent random variance $\sigma^2V$ where $V$ is a non-negative random variable. This is known as a \textit{Gaussian variance-mixture} and may be represented
\begin{equation}\label{normmix}
    X = \theta + \sigma\sqrt{V}Z
\end{equation}
where $Z\sim N(0,1)$, and $V$ is a random variable independent of $Z$. Intuitively one understands this by conditioning on $V$, re-scaling the standard Gaussian to determine the variance, and ultimately shifting the mean. Suppose we now centralise $X$ with $\theta=0$, specify $V$ to be a standard exponential random variable with rate $1$, and let $\sigma = \sqrt{2}$ so that $X=\sqrt{2V}Z$.
We will see that leads to a standard Laplace distribution.
\begin{prop}\label{Laprep}
    Let $Z$ have a standard Gaussian and $V$ be an independent a standard exponential random variable. Then the random variable $X = \sqrt{2V}Z$ has a standard Laplace distribution.
\end{prop}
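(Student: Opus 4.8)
The plan is to compute the characteristic function of $X$ directly and match it against the Laplace characteristic function already recorded above (which in the standard case $\theta=0$, $s=1$ reads $\psi(t)=1/(1+t^2)$); uniqueness of characteristic functions then forces $X$ to be standard Laplace. The whole argument leans on the conditioning structure already made explicit in the variance-mixture representation~(\ref{normmix}): given $V=v$, the variable $X=\sqrt{2V}\,Z$ is just a rescaled standard Gaussian, so its conditional law is known in closed form and the tower property reduces everything to a single expectation over $V$.

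Concretely, I would first condition on $V=v$. Since $Z\sim N(0,1)$ is independent of $V$, the conditional law of $X$ given $V=v$ is $N(0,2v)$, so the conditional characteristic function is
\begin{equation}
    \mathbb{E}\left[e^{itX}\mid V=v\right] = e^{-\frac{1}{2}(2v)t^2} = e^{-v t^2}.
\end{equation}
Applying the tower property and recognising the remaining integral as the moment generating function of a standard exponential evaluated at $-t^2$ gives
\begin{equation}
    \psi_X(t) = \mathbb{E}\left[e^{-V t^2}\right] = \int_0^\infty e^{-(1+t^2)v}\,dv = \frac{1}{1+t^2}.
\end{equation}
This coincides with the standard Laplace characteristic function stated earlier, so by the uniqueness theorem $X\sim CL(0,1)$. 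The only point requiring a word of justification is the interchange of expectation and integration in the tower step, which is immediate since the integrand is nonnegative (Tonelli) and the function is absolutely integrable.

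As a heavier but more explicit alternative, I would carry out the direct density computation: writing the conditional $N(0,2v)$ density and integrating it against the exponential density of $V$ yields
\begin{equation}
    f_X(x) = \frac{1}{2\sqrt{\pi}}\int_0^\infty v^{-1/2}\,e^{-x^2/(4v)-v}\,dv.
\end{equation}
The substitution $u=\sqrt{v}$ reduces this to the classical Gaussian-type integral $\int_0^\infty e^{-u^2-c/u^2}\,du=\tfrac{\sqrt{\pi}}{2}e^{-2\sqrt{c}}$ with $c=x^2/4$, which returns $f_X(x)=\tfrac12 e^{-|x|}$, exactly the standard Laplace density.

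The characteristic-function route has essentially no obstacle, which is why I would present it as the primary proof. The one genuinely nontrivial step — and the sticking point if one insists on the density computation — is evaluating the integral above, equivalently recognising it through the half-integer modified Bessel function $K_{1/2}(z)=\sqrt{\pi/(2z)}\,e^{-z}$. It is precisely the elementary form of $K_{1/2}$ that produces the clean exponential factor $e^{-|x|}$, and handling the $|x|$ arising from $\sqrt{x^2}$ is the only place where care is needed.
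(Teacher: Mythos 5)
Your primary argument is exactly the paper's proof: condition on $V$, use the tower property to reduce $\psi_X(t)$ to $\mathbb{E}_V[e^{-t^2V}]=M_V(-t^2)=1/(1+t^2)$, and invoke uniqueness of characteristic functions. The direct density computation you sketch as an alternative is a correct but heavier route the paper does not take; your main proof is correct and matches the paper's.
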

\begin{remark}\label{conditioning}
    Before providing the proof, we reflect on the Gaussian variance-mixture characterisation of $X$. When encountering such a mixture, we can first condition on the random variance to reduce the setup to Normality - something we understand well - then deal with the random variance separately. We will return to this idea throughout.
\end{remark}

\begin{proof}
    Here, we appeal to the identity theorem for the characteristic functions of probability distributions. For $V$ a standard exponential with density $f_V(v) = e^{-v}$, we have $M_V(t) = ({1-t})^{-1}$. For $Z$ a standard Gaussian variable, we have $\psi_Z(t) = e^{-t^2/2}$. With these in hand, we calculate the characteristic function of $X$ by conditioning on the value of the (non-negative) variance $V$,
    \begin{align*}
        \mathbb{E}\left[e^{itX}\right] &= \mathbb{E}\left[e^{it\sqrt{2V}Z}\right]=\mathbb{E}_V\left[\mathbb{E}_Z\left[e^{it\sqrt{2V}Z}\Big\vert V\right]\right]=\mathbb{E}_V\left[\psi_X\left(t\sqrt{2V}\right)\right]\\
        &=\mathbb{E}_V\left[e^{-t^2V}\right]=M_V\left(-t^2\right)=\frac{1}{1+t^2},
    \end{align*}
    which indeed coincides with the characteristic function of the standard Laplace random variable.
\end{proof}
If we generalise to $X=\theta + \sigma\sqrt{V}Z$, for fixed $\theta$ and $\sigma$, $Z$ a standard Gaussian r.v but $V\sim \exp(\lambda)$ one employs a similar method to show that $X\sim CL(\theta, \sigma/\sqrt{2\lambda})$. We aim to generalise this Gaussian variance-mixture further in section \ref{Section3}.

\begin{remark}
    This is a particularly useful characterisation of the Laplace distribution, as we will see in later sections. However, there are alternate characterisations of the Laplace distribution which provide a different interpretation if one models a log-return distribution with a Laplace distribution. 
    
    For example, the distribution can also be represented as a difference of two i.i.d. Exponential distributions (this can be seen by splitting the characteristic function $(1+t^2)^{-1} = (1+it)(1-it)$ for the standard Laplace, and similar for general Laplace). One can find further implications of considering the general Classical Laplace and later the variance gamma model as this difference in Fischer et al. \cite{fischer2023variancegammadistributionreview}.

    One can also characterise the Laplace as a maximum entropy distribution. The maximum entropy principle states that considering all distributions satisfying certain constraints, one should select the one with the largest entropy as characterising this distribution maximises randomness subject to the constraint - this idea is explored in detail by Rényi \cite{RenyiEntropy}. As a result, finding a maximum entropy distribution presents a suitable procedure for robust inference, detailed in Jaynes \cite{Jaynes}. If one specifies the first absolute central moment, the maximum entropy family with real support is the corresponding Laplace family, shown in the Appendix \ref{Entropyappendix}).

    For our study, we seek to explore inference from the random variance representation which allows us to consider a stochastic time change for a time-subordinated Brownian motion, and the corresponding interpretation for the model.
\end{remark}

\section{The Variance Gamma Process}\label{Section3}
\subsection{The symmetric Variance Gamma distribution}\label{sectionSVGd}

If one assumes the log-return distribution for an asset is Gaussian, geometric Brownian motion arises as the associated Lévy process to describe the evolution of the stock price. The fact that the sum of independent Gaussians is still Gaussian greatly helps in the characterisation of Brownian motion as a Lévy process with Gaussian distributed increments. If we consider an equivalent under a Laplace assumption, we need to understand sums of Laplace distributions and in doing this we will see that weakening the Laplace assumption to be slightly more general aids the description of the associated Lévy process.

Suppose that $X_i=\sqrt{V_i}Z_i$ for $i=1,\dots, n$ where $V_i\widesim{i.i.d}\exp(\lambda)$ and $Z_i\widesim{i.i.d}N(0,1)$ are independent of the $V_i$'s. From the generalisation of Prop. (\ref{Laprep}),
\begin{equation}
    \psi_{X_i}(t) = \left({1+\frac{t^2}{2\lambda}}\right)^{-1},
\end{equation}
so that for $S_n := \sum_{i=1}^nX_i$,
\begin{equation}\label{Ssum}
    \psi_{S_n}(t) = \left({1+\frac{t^2}{2\lambda}}\right)^{-n}.
\end{equation}
\begin{remark}\label{SGamma}
    We recognise a strong resemblance with the characteristic function of a Gamma distribution (given below) - one with shape $n$ and rate $\lambda$, only here the argument $t$ has been replaced with $-t^2/2$. We saw this arise in a similar manner with the Exponential characteristic function in the proof of Prop. (\ref{Laprep}) and this hints at a nice form for the more general Gaussian variance-mixture.
    
    Intuitively one might expect a connection with the Gamma distribution - as the variance of a sum of independent Gaussians is the sum of their variances (which are Exponentially distributed here), and a sum of independent Exponential distributions is a Gamma distribution. We formalise this notion below.
\end{remark}
\begin{definition}
Let $X$ be a Gamma distributed random variable with shape $\alpha>0$, rate $\beta>0$, denoted $X\sim\text{Gamma}(\alpha,\beta)$. Then the density $g_X(x)$ for $X$ is given by
\begin{equation}
    g_X(x) = \dfrac{\beta^\alpha}{\Gamma(\alpha)}x^{\alpha-1}e^{-\beta x},\hspace{0.5cm} x>0.
\end{equation}
\end{definition}
The characteristic and moment-generating functions are obtained from integration (employing the integral definition of the Gamma function).
\begin{equation}
    \psi_{X}(t) = \left(1-{\frac{it}{\beta}}\right)^{-\alpha }, \hspace{0.2cm} M_X(t) = \left(1-{\frac {t}{\beta }}\right)^{-\alpha }.
\end{equation}
We are now in place to define the Symmetric Variance Gamma distribution as a Gaussian variance-mixture.
\begin{definition}
    Let $X = \theta + \sigma\sqrt{V}Z$ where $\theta\in\mathbb{R}, \sigma>0$ are fixed, $Z$ has a standard Gaussian, and $V$ is an independent $\text{Gamma}(\alpha,\beta)$ distribution. Then $X\sim SVG(\theta,\sigma, \alpha, \beta)$ is said to have a Symmetric Variance Gamma distribution.
\end{definition}
\begin{remark}\label{SVGdefremark}
    Under this definition $X$ is a symmetric distribution about $\theta$, and so $\mathbb{E}[X] = \theta$. Importantly, one observes (e.g. from the characteristic function below) that $\sigma$ and $\beta$ are not-identifiable. However, this actually provides flexibility in interpreting properties of the distribution across different settings.
    
    If we restrict $V$ to have expectation $\mu=\alpha/\beta = 1$, then $\sigma$ becomes a volatility parameter. This can be seen by using the independence of $V$ and $Z$,
    \begin{equation}
        \text{Var }[X] = \mathbb{E}\left[\left(\sigma\sqrt{V}Z\right)^2\right] = \sigma^2\mathbb{E}\left[{V}Z^2\right] = \sigma^2\mathbb{E}[V]\mathbb{E}\left[Z^2\right] = \sigma^2.
    \end{equation}
    This is a particularly useful notion which we will employ in section \ref{SectionVGProcess} to define the Variance Gamma process. 
    
    Alternatively, we could fix $\sigma = 1$ and directly observe changes to $V$ across different settings e.g. over different time periods in the Variance Gamma process. This is another valuable notion one can use for modelling stochastic volatility.
\end{remark}
For $\alpha = 1$, we recover a Laplace distribution as the Gamma-distributed variance is simply Exponential. However, in general there is no closed form for the density of the $SVG$ distribution, and so we write through a Radon-Nikodym derivative (effectively conditioning on the Gamma-distributed variance). For a standard Gaussian density $\phi(x)$ and $\text{Gamma}(\alpha,\beta)$ density $g(x)$, one writes
\begin{equation}\label{VG density}
    f_X(x)=\int_{0}^{\infty}\phi\left(\dfrac{x-\theta}{\sigma\sqrt{v}}\right)g(v)dv=\int_{0}^{\infty}\frac{1}{\sqrt{2\pi\sigma^2v}}e^{-(x-\theta)^2/{(2\sigma^2v)}}\dfrac{\beta^\alpha}{\Gamma(\alpha)}v^{\alpha-1}e^{-\beta v}dv.
\end{equation}
As before, it will be useful to find the characteristic function of $X$ for future analysis. It has a generalised form of the Laplace characteristic function, as hinted in Remark (\ref{SGamma}).
\begin{prop}\label{VGcf}
    Let $Z$ be a standard Gaussian and $V$ be an independent $\text{Gamma}(\alpha,\beta)$ variable. In addition, let $\sigma>0$ be known and fixed. Then the characteristic function of the Symmetric $VG$ variable $X=\sigma \sqrt{V}Z$ is given by
    \begin{equation}
        \psi_{X}(t) = \left(1+\dfrac{t^2\sigma^2}{2\beta}\right)^{-\alpha}.
    \end{equation}
\end{prop}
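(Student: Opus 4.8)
The plan is to mirror the conditioning argument used in the proof of Proposition \ref{Laprep}, of which this is the natural generalisation (the Laplace case is recovered at $\alpha = 1$, $\beta = \lambda$, $\sigma = \sqrt{2}$). The key observation is that once we condition on the realisation of $V$, the variable $X = \sigma\sqrt{V}Z$ is simply a centred Gaussian with known variance $\sigma^2 V$, so its conditional characteristic function is immediate; the randomness in $V$ is then absorbed by the Gamma moment-generating function, exactly as the Exponential moment-generating function appeared in the proof of Proposition \ref{Laprep}.

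First I would invoke the tower property to write
\begin{equation*}
    \psi_X(t) = \mathbb{E}\left[e^{it\sigma\sqrt{V}Z}\right] = \mathbb{E}_V\left[\mathbb{E}_Z\left[e^{it\sigma\sqrt{V}Z}\,\Big\vert\,V\right]\right].
\end{equation*}
The inner expectation is the standard Gaussian characteristic function $\psi_Z$ evaluated at the argument $t\sigma\sqrt{V}$, giving $\psi_Z(t\sigma\sqrt{V}) = e^{-t^2\sigma^2 V/2}$. Substituting back leaves
\begin{equation*}
    \psi_X(t) = \mathbb{E}_V\left[e^{-t^2\sigma^2 V/2}\right] = M_V\!\left(-\frac{t^2\sigma^2}{2}\right),
\end{equation*}
which is precisely the Gamma moment-generating function evaluated at a negative argument. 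I would then substitute $M_V(s) = (1 - s/\beta)^{-\alpha}$ with $s = -t^2\sigma^2/2$ to obtain $\psi_X(t) = \left(1 + t^2\sigma^2/(2\beta)\right)^{-\alpha}$, as claimed.

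The argument is almost entirely routine, so there is no serious obstacle; the only point requiring brief care is the justification of the interchange of the two expectations. This is immediate, however, since the integrand $e^{it\sigma\sqrt{V}Z}$ has modulus one and is therefore bounded, so Fubini's theorem (equivalently dominated convergence) applies without difficulty. One should also confirm that the substitution into $M_V$ is legitimate: the Gamma moment-generating function is finite for all arguments $s < \beta$, and here $s = -t^2\sigma^2/2 \leq 0 < \beta$ for every real $t$, so the evaluation is valid on the whole real line.
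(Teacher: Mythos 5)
Your proposal is correct and follows essentially the same route as the paper's proof: condition on $V$, identify the inner expectation as $\psi_Z(t\sigma\sqrt{V}) = e^{-t^2\sigma^2V/2}$, and recognise the outer expectation as $M_V(-t^2\sigma^2/2)$. The additional remarks on Fubini and on the domain of the Gamma moment-generating function are sound, if not strictly required by the paper's level of rigour.
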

\begin{proof}
    As before, we will show this by conditioning on the value of the variance $V$. We first note that for $V\sim\Gamma(\alpha,\beta)$ we have $M_V(t)=({1-{t}/{\beta}})^{-\alpha}$ and again for $Z$, a standard Gaussian variable we have $\psi_Z(t) = e^{-t^2/2}$. Then proceeding as before, 
    \begin{align*}
        \mathbb{E}\left[e^{itX}\right] &= \mathbb{E}\left[e^{it\sigma\sqrt{V}Z}\right]=\mathbb{E}_V\left[\mathbb{E}_Z\left[e^{it\sigma\sqrt{V}Z}\hspace{0.1cm}\Big\vert V\right]\right]=\mathbb{E}_V\left[\psi_Z\left(t\sigma\sqrt{V}\right)\right]=\mathbb{E}_V\left[e^{-t^2\sigma^2V/2}\right]\\
        &=M_V\left(-t^2\sigma^2/2\right)=\left(1+\dfrac{t^2\sigma^2}{2\beta}\right)^{-\alpha}.
    \end{align*}
\end{proof}

\subsection{The Variance Gamma distribution}\label{3.2}

Having motivated and explored a generalisation from the symmetric Laplace distribution to the symmetric Variance Gamma, we now consider the full Variance Gamma distribution which can be represented as a Gaussian with both stochastic variance \textit{and} stochastic mean. This generalisation enables the inclusion of a drift term in the associated Lévy process, accounting for possible asymmetry in the statistical and risk-neutral log return distributions.

\begin{definition}
Let $X = c + \theta V + \sigma\sqrt{V}Z$ where $c,\theta\in\mathbb{R}, \sigma>0$ are fixed, $Z$ has a standard Gaussian, and $V$ is an independent $\text{Gamma}(\alpha,\beta)$ distribution. Then $X\sim VG(c,\theta,\sigma, \alpha, \beta)$ is said to have a Variance Gamma distribution.
\end{definition}
\begin{remark}\label{reparametrisegamma}
    Under this new definition if $\theta\neq0$, $X$ is no longer a symmetric distribution due to the $\theta V$ term. For $\theta\neq0$, $\sigma$ no longer represents the volatility as $\text{Var}[X] = \theta^2\text{Var}[V] + \sigma^2\mathbb{E}[V]$. However, as in Remark \ref{SVGdefremark}, if we let $\mathbb{E}[V] = \mu = 1$ and denote $\text{Var}[V] = \nu$ then the volatility is given by $\theta^2\nu + \sigma^2$. 
    
    This motivates a reparametrisation which we will use to define the Gamma Process in section \ref{SectionVGProcess}. We can alternatively define a Gamma-distributed random variable $X\sim \text{Gamma}(\alpha, \beta)$ in terms of its positive mean $\mu = \alpha/\beta$ and variance $\nu = \alpha/\beta^2$ so that under a re-parametrisation, we write $X\sim\Gamma_{\mu,\nu}$ with
\begin{equation}
    f_X(x) = \dfrac{\left(\frac{\mu}{\nu}\right)^{\mu^2/\nu}}{\Gamma\left(\mu^2/\nu\right)}x^{\mu^2/\nu-1}e^{-\mu/\nu x}, \hspace{0.5cm} \psi_{X}(t) = \left(\dfrac{1}{1-\frac{i\nu}{\mu}t}\right)^{\mu^2/\nu}, \hspace{0.5cm} M_{X}(t) = \left(\dfrac{1}{1-\frac{\nu}{\mu}t}\right)^{\mu^2/\nu}.
\end{equation}
\end{remark}
We may again write the density of a $VG$ variable in terms of a Radon-Nikodym density, similar to (\ref{VG density}), using the original parametrisation in terms of $\text{Gamma}(\alpha,\beta)$,
\begin{equation}\label{General VG Density}
    f_X(x) = \int_{0}^{\infty}\phi\left(\dfrac{x-c-\theta v}{\sigma\sqrt{v}}\right)g(v)dv=\int_{0}^{\infty}\frac{1}{\sqrt{2\pi\sigma^2v}}e^{-(x-c-\theta v)^2/{(2\sigma^2v)}}\dfrac{\beta^\alpha}{\Gamma(\alpha)}v^{\alpha-1}e^{-\beta v}dv,
\end{equation}
and compute the characteristic function in a similar manner to Prop. (\ref{VGcf}).
\begin{prop}\label{aVGcf}
    Let $Z$ be a standard Gaussian and $V$ be an independent $\text{Gamma}(\alpha,\beta)$ variable. Further, let $\theta\in\mathbb{R},\sigma>0$ be known and fixed. Then the characteristic function of the $VG$ variable $X=\theta V + \sigma \sqrt{V}Z$ is given by
    \begin{equation}
        \psi_{X}(t) = \left(1 - \frac{i\theta t}{\beta} + \dfrac{t^2\sigma^2}{2\beta}\right)^{-\alpha}.
    \end{equation}
\end{prop}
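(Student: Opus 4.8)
The plan is to mirror the conditioning argument used in the proof of Proposition \ref{VGcf}, exploiting the tower property of expectation by first conditioning on the Gamma-distributed variance $V$. The only genuinely new ingredient is the stochastic mean term $\theta V$, which shifts the conditional mean of $X$ but otherwise leaves the structure of the calculation intact.

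First I would condition on the value $V = v$. Given this, $X = \theta v + \sigma\sqrt{v}\,Z$ is Gaussian with mean $\theta v$ and variance $\sigma^2 v$, so its conditional characteristic function is $\mathbb{E}\left[e^{itX}\mid V=v\right] = e^{it\theta v}\,\psi_Z\left(t\sigma\sqrt{v}\right) = \exp\left(it\theta v - t^2\sigma^2 v/2\right)$. I would then collect this into a single linear-in-$v$ exponent, writing it as $\exp\left(v\left(it\theta - t^2\sigma^2/2\right)\right)$, which is the key algebraic rearrangement that sets up the next step.

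Next I would take the outer expectation over $V$ and recognise the result as the moment generating function of $V$ evaluated at the argument $s = it\theta - t^2\sigma^2/2$, i.e. $\psi_X(t) = \mathbb{E}_V\left[e^{sV}\right] = M_V(s)$. Substituting the Gamma moment generating function $M_V(s) = (1 - s/\beta)^{-\alpha}$ and simplifying yields
\begin{equation*}
    \psi_X(t) = \left(1 - \frac{it\theta - t^2\sigma^2/2}{\beta}\right)^{-\alpha} = \left(1 - \frac{i\theta t}{\beta} + \frac{t^2\sigma^2}{2\beta}\right)^{-\alpha},
\end{equation*}
which is exactly the claimed expression.

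The main obstacle, and really the only point requiring care, is that we are now evaluating the Gamma moment generating function at a \emph{complex} argument rather than at the real, nonpositive argument $-t^2\sigma^2/2$ used in Proposition \ref{VGcf}. I would justify this by observing that $\mathrm{Re}(s) = -t^2\sigma^2/2 \le 0 < \beta$, so the integral $\int_0^\infty e^{sv}\,g(v)\,dv$ converges absolutely and defines the analytic continuation of $M_V$, which coincides with the closed form $(1 - s/\beta)^{-\alpha}$ throughout the half-plane $\mathrm{Re}(s) < \beta$. With this justification in place, every remaining step is routine algebra identical in spirit to the symmetric case.
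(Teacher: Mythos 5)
Your proposal is correct and follows essentially the same route as the paper: condition on $V$, reduce to $\mathbb{E}_V\bigl[e^{(it\theta - t^2\sigma^2/2)V}\bigr]$, and evaluate via the Gamma integral (the paper writes this integral out explicitly and notes $\beta + t^2\sigma^2/2 > 0$ for convergence, which is the same justification you give for evaluating $M_V$ at a complex argument with $\mathrm{Re}(s) \le 0 < \beta$). Your extra remark about analytic continuation is a slightly more careful phrasing of the step the paper performs directly, but the argument is identical in substance.
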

\begin{proof}
   Proceeding as in the proof of Prop. (\ref{VGcf}) by conditioning on the value of $V$, we find
   \begin{align}
       \mathbb{E}\left[e^{itX}\right] &= \mathbb{E}\left[e^{it(\theta V+\sigma\sqrt{V}Z)}\right]=\mathbb{E}_V\left[\mathbb{E}_Z\left[e^{it(\theta V+\sigma\sqrt{V}Z)}\hspace{0.1cm}\Big\vert V\right]\right] = \mathbb{E}_V\left[e^{it\theta V}\psi_Z\left(t\sigma\sqrt{V}\right)\right] \\
       &=\mathbb{E}_V\left[e^{it\theta V- t^2\sigma^2V/2}\right] = \mathbb{E}_V\left[e^{(it\theta- t^2\sigma^2/2)V}\right]\\
       &=\int_{0}^\infty e^{(it\theta- t^2\sigma^2/2)v}\dfrac{\beta^\alpha}{\Gamma(\alpha)}v^{\alpha-1}e^{-\beta v}dv = \int_{0}^\infty \dfrac{\beta^\alpha}{\Gamma(\alpha)}v^{\alpha-1}e^{-(\beta-it\theta+t^2\sigma^2/2)v} dv,
   \end{align}
   from which the result follows upon applying the definition of the Gamma function (noting $\beta+t^2\sigma^2/2>0$),
   \begin{equation}
       \int_{0}^\infty\dfrac{\beta^\alpha}{\Gamma(\alpha)}v^{\alpha-1}e^{-(\beta-it\theta+t^2\sigma^2/2)v} dv = \dfrac{\beta^\alpha}{\Gamma(\alpha)}\dfrac{\Gamma(\alpha)}{\left(\beta-it\theta+t^2\sigma^2/2\right)^\alpha} = \left(1 - \frac{i\theta t}{\beta} + \dfrac{t^2\sigma^2}{2\beta}\right)^{-\alpha}
   \end{equation}
   as desired.
\end{proof}

\subsection{The Variance Gamma process}\label{SectionVGProcess}
We have characterised the $VG$ distribution and now turn to the associated $VG$ process. This will follow a similar analysis, considering the Lévy process as a composition mixture of a Gaussian process (a Brownian motion) and a Gamma process as described in 4.2 of Kotz et al. \cite{Laplacebook}. First, we clarify our definition of a Brownian motion.
\begin{definition}\label{Brownian}
    Let $(W_t)_{t\geq0}$ be a standard Wiener process where $W_0 = 0$ a.s. , $W_t$ has stationary Gaussian increments so that $W_{t+h} - W_t \sim N(0, h)$ where the distributions are independent over non-overlapping time intervals, and $W_t$ is continuous $a.s.$ . Then the process $(b_t)_{t\geq0}$ given by
    \begin{equation}\label{Brownian motion}
        b_t = b(t;\theta, \sigma) = \theta t + \sigma W_t
    \end{equation}
    is a Brownian motion with drift $\theta\in\mathbb{R}$ and volatility $\sigma>0$.
\end{definition}
\noindent It is useful to recall the characteristic function for a Brownian motion,
\begin{equation}\label{bmcf}
    \psi_{b(t;\theta,\sigma)}(u) = \exp\left(i\theta tu -\frac{1}{2}\sigma^2t u^2 \right).
\end{equation}
Our $VG$ distribution was characterised as a Gaussian with a random Gamma distributed mean and variance. In a Brownian motion, the increments have a mean and variance proportional to the size of the time increment, so to create an analogue we seek a way of transforming the time increments randomly using a Gamma distribution and then evaluating the Brownian motion at the resulting transformed time - this is known as a \textit{time-subordinated} Brownian motion where the Gamma process is the time-subordinator. We define this below (employing the re-parametrisation of remark \ref{reparametrisegamma}).
\begin{definition}\label{Gamprocess}
    Let $(\gamma_t)_{t\geq0} = \gamma(t;\mu,\nu)$ be a Gamma process for $\mu\in\mathbb{R}, \nu\geq0$.\footnote{The degenerate case $\nu=0$ is included and viewed as a deterministic limit process $\mu t$ as $\nu\to0.$} Then $\gamma_0 = 0$ $a.s.$ , $\gamma_t$ has stationary Gamma distributed increments s.t. $\gamma_{t+h}-\gamma_t\sim \Gamma_{\mu h,\nu h}$ where the distributions are independent over non-overlapping time intervals, and $\gamma_t$ is continuous a.s. .
\end{definition}
\noindent Again it will be useful to have the characteristic and moment generating functions,
\begin{equation}\label{gammatcf}
    \psi_{\gamma(t;\mu,\nu)}(u) = \left(\dfrac{1}{1-\frac{i\nu}{\mu}u}\right)^{\mu^2t/\nu}, \hspace{0.5cm} M_{\gamma(t;\mu,\nu)}(u) = \left(\dfrac{1}{1-\frac{\nu}{\mu}u}\right)^{\mu^2t/\nu}
\end{equation}
Now let $\mu = 1$ so that the mean rate of the Gamma process is $1$. This is a helpful characterisation as it means that at time $T$, the expectation of our random Gamma distributed time $\mathbb{E}\left[\gamma(T;1,\nu)\right]=T$ and we have random fluctuation only about the deterministic time.

\begin{remark}\label{gammatime}
    For our purposes, we may think of the Gamma process as a map $(T_t)_{t\geq0}\mapsto(\gamma_t)_{t\geq0}$ from the `deterministic time clock' $T_t=t$ onto a new `Gamma time clock' $\gamma_t$.
    
    Figure \ref{gammasim} provides examples for some of the possible properties of the Gamma subordinator process. We can think of the original clock being slowed in regions of low gradient and sped up up for high gradient. The markers correspond to unit time intervals on the new clock, and demonstrate instances of this time dilation - regions in which the markers are more spread out correspond to time being dilated and slowed, whereas regions with clusters of markers correspond to time being contracted and sped up. 
    
    The mean rate of the time change is null, in that on average 1 unit of deterministic time corresponds to 1 unit of Gamma-changed time (as shown by the $y=x$ line).
\end{remark}
\begin{figure}[h!]
    \centering
    \includegraphics[width=0.45\linewidth]{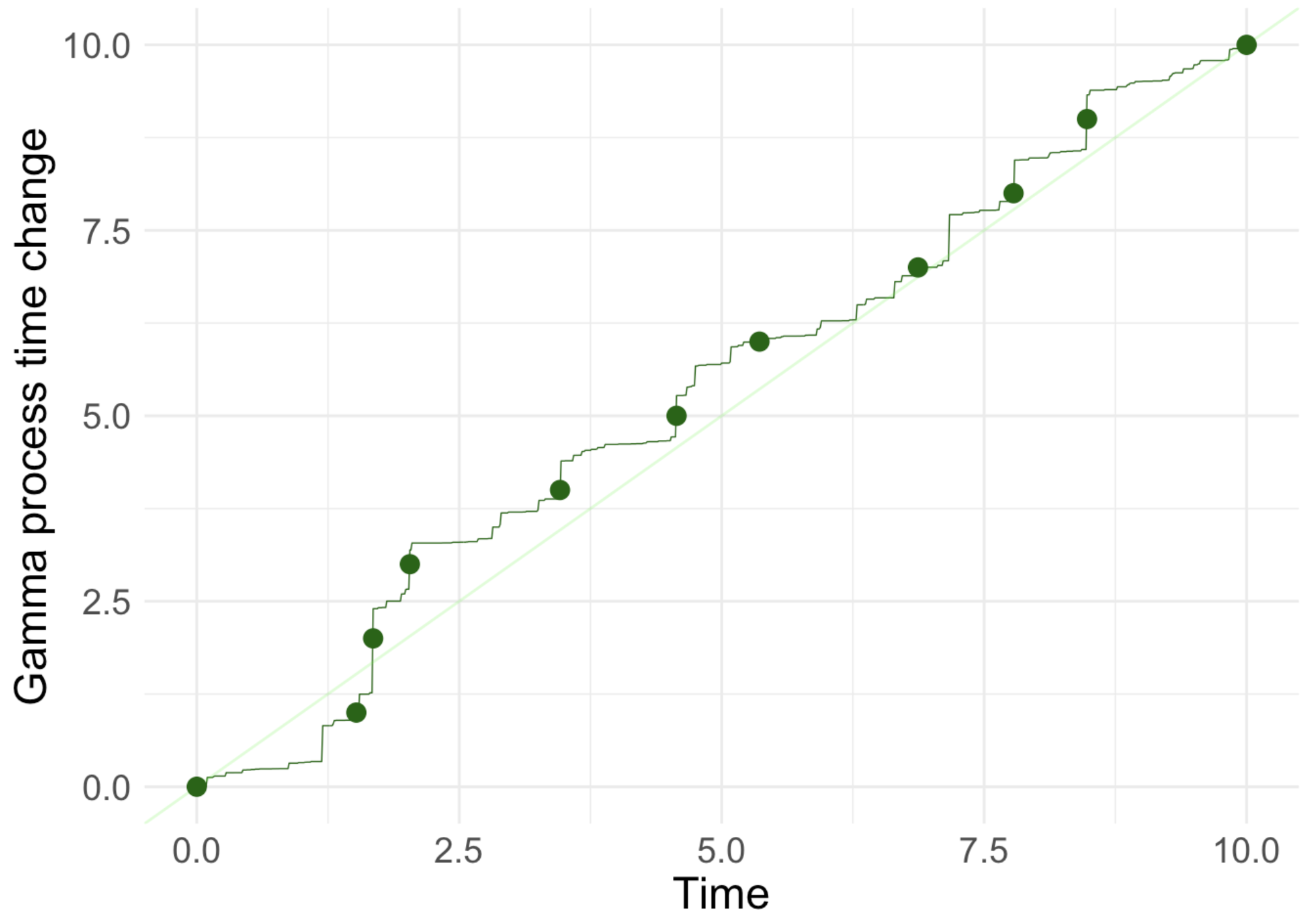}
    \caption{Simulated Gamma process: $\gamma(t;\mu =1,\nu = 0.3)$}
    \label{gammasim}
\end{figure}

\noindent We can finally characterise our $VG$ process as Gamma subordinated Brownian motion.
\begin{definition}
    Let $(b_t)_{t\geq0} = b(t;\theta, \sigma)$ be a Brownian motion as in definition (\ref{Brownian}),  and let $(\gamma_t)_{t\geq0} = \gamma(t;1,\nu)$ be a Gamma process as in definition (\ref{Gamprocess}). Then the time-subordinated process $(X_t)_{t\geq0}$ given by the composition
    \begin{align}
        X_t = X(t;\theta,\sigma, \nu) &= b\big(\gamma(t;1,\nu);\theta, \sigma\big) \\
        &= \theta \gamma_t + \sigma W_{\gamma_t},\label{thetagammasigmawienergamma}
    \end{align}
    is known as a Variance Gamma process denoted $VG(t;\theta,\sigma, \nu)$.
\end{definition}
\begin{remark}
    Breaking down this definition, there are two sources of randomness. First is the Wiener process which acts the same as in the Brownian motion, creating Gaussian noise about the drift $\theta t$. However, there is also the Gamma time-subordination, so that we think of the process as first randomly mapping deterministic time onto a Gamma-changed time as in remark \ref{gammatime} and then evaluating the Brownian motion at the Gamma time change \footnote{We have chosen the time-changed Brownian motion as our characterisation of the $VG$ process. Similar to the comments in section 2.4.2, there are alternate characterisations of the $VG$ process e.g. as a difference of independent Gamma processes or as the limit of a compound poisson approximation. We do not cover these here, as we seek to explore inferences based on the random time-change characterisation. The alternate characterisations are detailed 4.2.3 of Kotz et al. \cite{Laplacebook}.}. For a more detailed review of time-subordinated processes, see section 1.3 of Applebaum \cite{Applebaum}.
    
    A Brownian motion is simulated in Figure \ref{VGcomponents}(a) and mapped onto the VG process in Figure \ref{VGcomponents}(b) through the Gamma time-subordination of Figure \ref{gammasim}. I.e. $b(t)\mapsto b(\gamma(t))$. Through the time-transformation, the markers in the Brownian motion are mapped onto the VG process in the Figure via $(t,b(t))\mapsto (\gamma(t),b(\gamma(t))$. 
    
    These demonstrate the effect of the time-subordination - while the general shape of the motion is mostly preserved, one observes the dilation and contraction of the Brownian motion along the time-axis caused by the time-subordination.
\end{remark}
\begin{figure}[h!]
    \centering
    \subfloat[\centering]{{\includegraphics[width=0.45\linewidth]{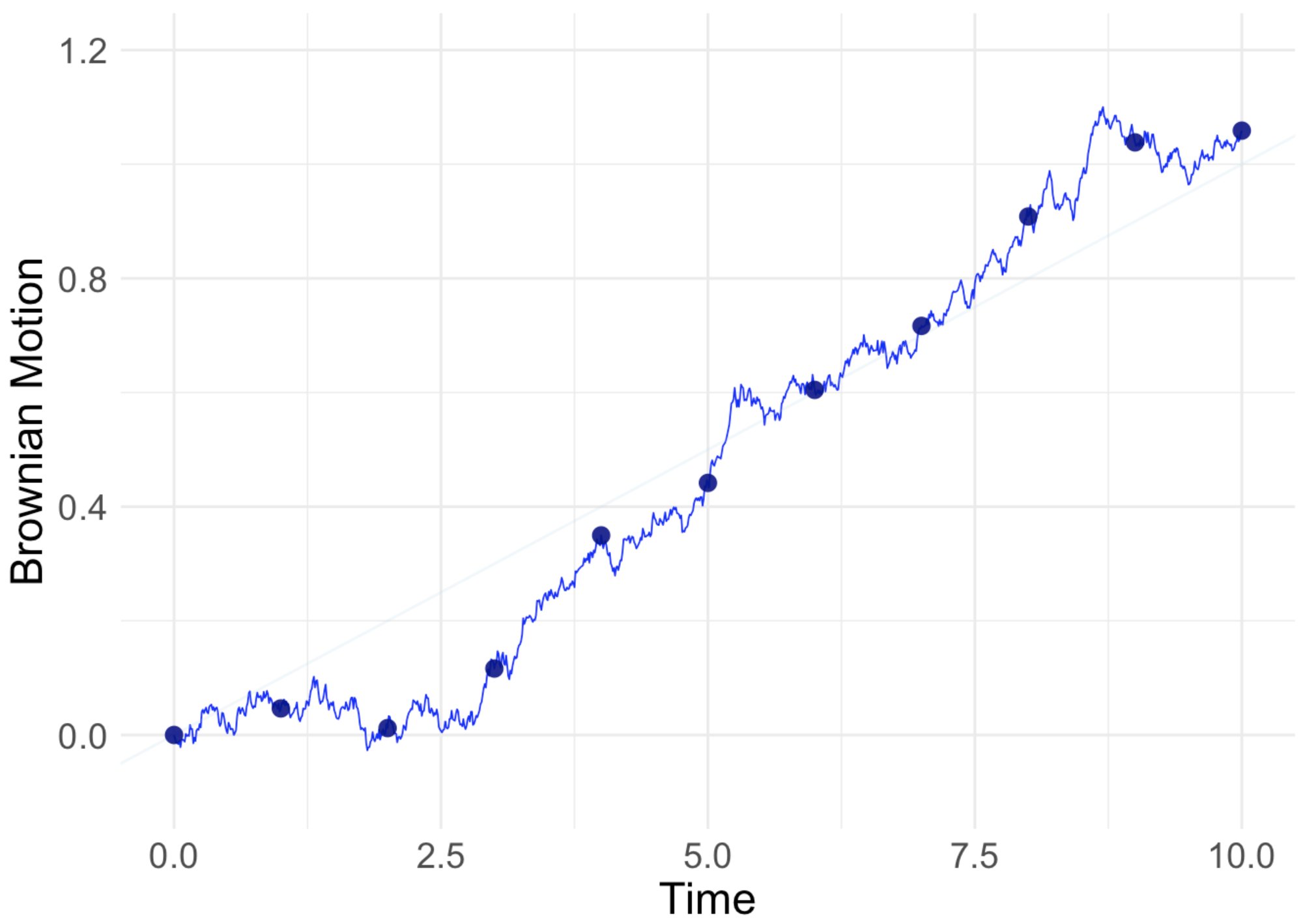} }}
    \qquad
    \subfloat[\centering]{{\includegraphics[width=0.45\linewidth]{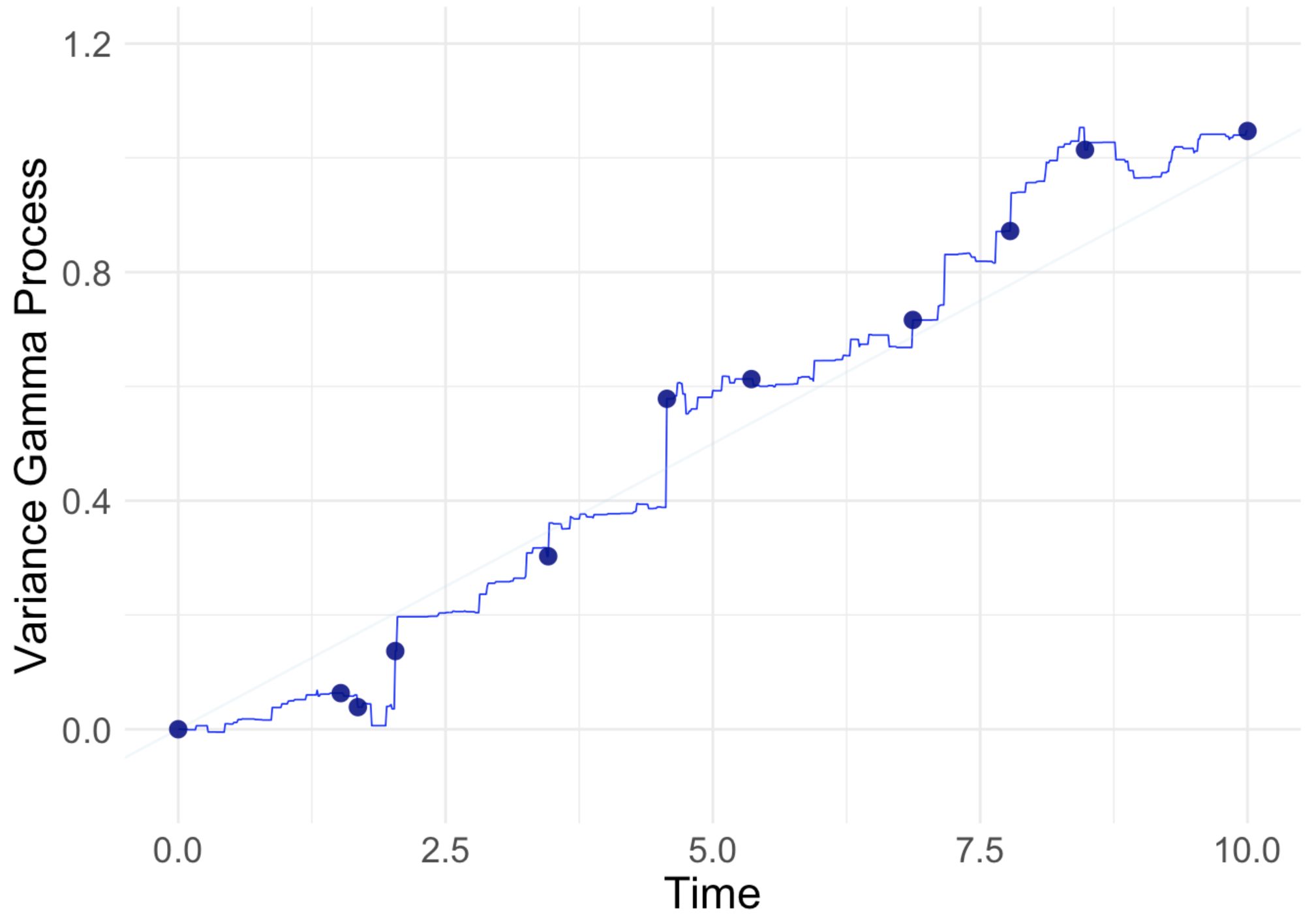} }}
    \caption{Realised components of a $VG$ process: $VG(t;\theta = 0.1, \sigma = 0.1, \nu = 0.3)$}
    \label{VGcomponents}
\end{figure}

Using this idea we can obtain the density function of the process at time $t$ by conditioning on a realisation of the Gamma time-subordinator (giving  a Brownian motion density), and integrating against the Gamma density. Similar to (\ref{General VG Density}), we have
\begin{equation}
    f_{X_t}(x) = \int\phi\left(\dfrac{x-\theta g}{\sigma\sqrt{g}}\right)f_{\gamma_t}(g)dg
    = \int_{0}^\infty\dfrac{1}{\sqrt{2\pi\sigma^2g}}\exp\left(-\frac{(x-\theta g)^2}{2\sigma^2g}\right)\dfrac{g^{\frac{t}{\nu}-1}\exp\left(-\frac{g}{\nu}\right)}{\nu^{\frac{t}{\nu}}\Gamma\left(\frac{t}{\nu}\right)}dg.
\end{equation}
We may also obtain the characteristic function for the process. Again, we condition on realisation of the Gamma time-subordinator with a similar method to the proof of Prop. (\ref{VGcf}), using the characteristic functions in (\ref{bmcf}) and (\ref{gammatcf}). We then find
\begin{equation}\label{VGc.f.}
    \psi_{X_t}(u) = \left(\dfrac{1}{1-i\theta\nu u + \frac{\sigma^2}{2}\nu u^2}\right)^\frac{t}{\nu}.
\end{equation}
\begin{remark}\label{Normalnest}
    Taking the limit as $\nu\downarrow0$ in the $VG$ process, we obtain weak convergence to a Brownian motion $b(t;\theta,\sigma)$ as in (\ref{Brownian motion}) - this is shown in Appendix \ref{BStoVGAppendix} using the characteristic function above. This should be expected, as the Gamma process subordinating the time becomes degenerate with $0$ variance about the mean $t$, and there is no time-change for the Brownian motion. This shows that the $VG$ model still contains Brownian motion as a sub-model - this is a particularly useful property if one employs a hypothesis test which nests a Brownian motion model null inside a more general $VG$ model alternative.
\end{remark}
One could compute the central moments of the $VG$ process from the characteristic function (\ref{VGc.f.}); however, employing the form (\ref{thetagammasigmawienergamma}) simplifies calculations. Conditioning on the Gamma time-subordinator as $\gamma_t = g$, so that the conditional $VG$ process is a Brownian motion and we write
\begin{equation}
    X\left(t;\theta, \sigma \big\vert \gamma_t=g\right) = \theta g +\sigma W_g.
\end{equation}
We can then take the expectation over the Wiener process, and then the expectation over $g=\gamma_t$ as in A4 of Madan et al. \cite{MadanCarrChang}). 
We obtain,
\begin{align*}
    &\mathbb{E}\left[X_t\right] = \theta t, \hspace{0.2cm}\\&\mathbb{E}\left[\left(X_t-\mathbb{E}[X_t]\right)^2\right] = (\theta^2\nu+\sigma^2)t, \hspace{0.2cm}\\&\mathbb{E}\left[\left(X_t-\mathbb{E}[X_t]\right)^3\right] = (2\theta^3\nu^2+3\sigma^2\theta\nu)t,\\ &\mathbb{E}\left[\left(X_t-\mathbb{E}[X_t]\right)^4\right] = (3\sigma^4\nu+12\sigma^2\theta^2\nu^2 + 6\theta^4\nu^3)t + (3\sigma^4+6\sigma^2\theta^2\nu+3\theta^4\nu^2)t^2.
\end{align*}
If we consider the case $\theta = 0$, then we have no skewness as the third central moment becomes $0$. Furthermore, the kurtosis would then be given by $3(1+\nu/t)$ so that $\nu$ represents the percentage excess kurtosis over the Brownian motion process over a unit time increment (where over longer time increments the excess kurtosis decreases to $0$ linearly in $t$). 

However, for $\theta\neq0$ the variance of the process is higher than that of the Brownian motion - this increase in average variation stems from additional variance contributions from the random drift component under the time-subordination (being deterministic, the drift component in a Brownian motion does not contribute to the variance). As a result, the Variance-Gamma typically values options higher due to this increase in volatility.

\section{Option Pricing}\label{Section4}
\subsection{Physical process}

We are now suitably placed to begin exploring the dynamics of a stock price under a $VG$ model, and the associated European call option pricing mechanism. Our exploration follows a similar method to that of the Black-Scholes formulation for European call options, now extending geometric Brownian motion assumption to a more general geometric $VG$ process.

Consider a continuous-time economy where a stock, a money market account and European options for the stock for all strikes and maturities are traded in a given time frame. Further suppose that there is a constant continuously compounded interest rate of $r$ with corresponding money market account value of $e^{rt}$ at time $t$ following a unit deposit at time $0$. 

Under a geometric $VG$ model, the price of the underlying asset $(S_t)_{t\geq0}$ is given by $S_t = S_0e^{mt+X(t;\theta,\sigma,\nu)+\omega(t)}$, where $X(t;\theta,\sigma,\nu)$ is a $VG$ process, and $\omega(t) = \frac{t}{\nu}\log\left(1-\nu\theta-\nu{\sigma^2}/{2}\right)$, so that the mean return rate is $e^{mt}$ for some $m\in\mathbb{R}$ (this can be verified by evaluating the expectation of $S_t$) 
Writing the process in this form allows one to easily identify the mean return rate.

We detail the pricing of European options under the $VG$ model. Specifically we use an Equivalent Martingale Measure (EMM) under which the discounted process $(e^{-rt}S_t)_{t\geq0}$ is a Martingale quantity - this encompasses an appropriate change of measure from the physical measure to a risk-neutral measure outlined through the Esscher transform method below (first developed by Frederik Esscher in 1932 \cite{Esscher}). By finding such a measure, we use the risk-neutral valuation principle to find the price of European call option is given by the expected payoff of the option with respect to the process under the risk-neutral measure.

\subsection{Esscher transform}
We first provide a definition of the Esscher transform (as presented in section 2 of Gerber and Shiu \cite{Gerber1995OptionPB}).
\begin{definition}
    Let $f(x,t)$ be the density for a Lévy process $(X_t)_{t\geq0}$ at time $t$. One defines the \textit{Esscher transform} of $f(x,t)$ as
\begin{equation}
    \hat{f}(x,t,h) = \dfrac{e^{hx}f(x,t)}{M(h,t)},
\end{equation}
where $M(\cdot,t)$ is the moment generating function of $X_t$ at time $t$.
\end{definition}
\begin{remark}
    The Esscher transform of a probability density is itself a new probability density (as the m.g.f. in the denominator normalises the function to integrate to 1) and this associated Esscher transformed measure is equivalent (mutually absolutely continuous) to the original. Denoting the c.d.f. of $X_t$, $F_X(x,t)$,
    \begin{equation}
        F_{X}(x,t) = \int_{-\infty}^xf(\xi,t)d\xi,
    \end{equation}
    where $f(x,t)$ is the density of $X_t$. Then the Esscher transformed variable denoted $X_t^h$, has a cumulative function given by,
    \begin{equation}
        \hat{F}(x,t,h) = \int_{-\infty}^x \hat{f}(\xi,t,h) d\xi =  \int_{-\infty}^x \dfrac{e^{h\xi}f(\xi,t)}{M(h,t)} d\xi.
    \end{equation}
\end{remark}
From this, we can obtain an EMM for the process $X_t$ \footnote{Note that an EMM is not necessarily unique, and indeed when $(X_t)_{t\geq0}$ follows a $VG$ process we do not have uniqueness. Indeed for geometric Lévy processes, only Brownian motion and compensated Poisson processes have a unique EMM as shown in Théorème 1 of \cite{Yor1978}. This means that there are a range of no-arbitrage prices for the $VG$ process.

This is commented on by Andrusiv and Engelbert \cite{esscherentropy} who show that for general Lévy financial markets, the minimal entropy martingale measure coincides with the Esscher martingale measure, and argue that it is an optimal EMM for various utility maximisation problems.
}. 
Letting the physical measure be $\mathbb{P}$ and the Esscher transformed measure be $\mathbb{Q}^h$, we may write the corresponding Radon-Nikodym derivative as,
\begin{equation}
    \dfrac{d\mathbb{Q}^h}{d\mathbb{P}}= \frac{e^{hX_t}}{M(h,t)}
\end{equation}
where we need to choose $h$ appropriately to ensure $\mathbb{Q}^h$ is a Martingale measure for the discounted process $(e^{-rt}S_t)_{t\geq0}$ where $S_t=S_0e^{X_t}$. 
\begin{lemma}\label{hstarlemma}
    The value of $h = h^*$ which makes the discounted process a Martingale quantity under the corresponding transformed measure is determined implicitly through
    \begin{equation}\label{hstar}
    e^{rt}=\dfrac{M(h^*+1,t)}{M(h^*,t)}.
\end{equation}
\end{lemma}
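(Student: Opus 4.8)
The plan is to translate the requirement that $(e^{-rt}S_t)_{t\geq0}$ be a $\mathbb{Q}^h$-martingale into a condition on the moment generating function $M(\cdot,t)$, and then read off the equation defining $h^*$. First I would write out the martingale property explicitly: for $0\leq s\leq t$ we need $\mathbb{E}_{\mathbb{Q}^h}[e^{-rt}S_t\mid\mathcal{F}_s]=e^{-rs}S_s$. Substituting $S_t=S_0e^{X_t}$ and cancelling the common factor $e^{-rs}S_0e^{X_s}$, this is equivalent to $\mathbb{E}_{\mathbb{Q}^h}[e^{X_t-X_s}\mid\mathcal{F}_s]=e^{r(t-s)}$.

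Next I would exploit the Lévy structure. Because the Esscher density $e^{hX_t}/M(h,t)$ factorises over increments, the transformed process remains a Lévy process under $\mathbb{Q}^h$, so $X_t-X_s$ is independent of $\mathcal{F}_s$ and has the law of $X_{t-s}$. The martingale property therefore collapses to the single unconditional requirement
\[
\mathbb{E}_{\mathbb{Q}^h}\!\left[e^{X_\tau}\right]=e^{r\tau}\qquad\text{for all }\tau\geq0.
\]
Evaluating the left-hand side by changing measure back to $\mathbb{P}$ with the Radon--Nikodym derivative gives
\[
\mathbb{E}_{\mathbb{Q}^h}\!\left[e^{X_t}\right]=\mathbb{E}_{\mathbb{P}}\!\left[e^{X_t}\,\frac{e^{hX_t}}{M(h,t)}\right]=\frac{\mathbb{E}_{\mathbb{P}}[e^{(h+1)X_t}]}{M(h,t)}=\frac{M(h+1,t)}{M(h,t)},
\]
the final step being merely the definition of the m.g.f.\ at argument $h+1$. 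Equating this ratio with $e^{rt}$ yields the implicit equation $e^{rt}=M(h^*+1,t)/M(h^*,t)$ claimed in the statement.

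The point needing the most care is consistency in $t$: taken at face value the displayed equation could appear to determine a different $h$ for each maturity. I would resolve this by invoking $M(h,t)=M(h,1)^{t}$, a consequence of the stationary independent increments, which turns the equation into $\bigl(M(h^*+1,1)/M(h^*,1)\bigr)^{t}=e^{rt}$, i.e.\ the single $t$-free equation $M(h^*+1,1)/M(h^*,1)=e^{r}$. This confirms that one $h^*$ serves every maturity and, read backwards, recovers the full conditional martingale property from the first step. The only remaining routine check is that $M(h,t)$ is finite for $h$ in an open interval containing $[0,1]$ --- readable off the characteristic function (\ref{VGc.f.}) via the substitution $u\mapsto -ih$ --- so that the ratio is well defined and a solution $h^*$ exists.
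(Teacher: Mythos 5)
Your proof is correct, and its core is the same computation the paper uses: change measure back to $\mathbb{P}$ via the Radon--Nikodym derivative $e^{hX_t}/M(h,t)$ and recognise $\mathbb{E}^{\mathbb{P}}[e^{(h+1)X_t}]/M(h,t)$ as the ratio $M(h+1,t)/M(h,t)$, which is then equated with $e^{rt}$. Where you go further is in what you take the martingale property to mean: the paper only verifies the single condition $\mathbb{E}^{\mathbb{Q}}[e^{-rt}S_t\mid\mathcal{F}_0]=S_0$, whereas you check $\mathbb{E}^{\mathbb{Q}}[e^{-rt}S_t\mid\mathcal{F}_s]=e^{-rs}S_s$ for all $0\leq s\leq t$, reducing it to the unconditional statement via the fact that the Esscher transform preserves the Lévy (stationary independent increments) structure, and you observe that $M(h,t)=M(h,1)^t$ makes the defining equation (\ref{hstar}) independent of $t$, so one $h^*$ serves all maturities. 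Both of these points are genuine gaps in the paper's one-line verification that your argument closes; the price is that you must invoke (or prove) the factorisation of the Esscher density over increments, which the paper sidesteps by settling for the weaker time-zero condition. Your final remark about finiteness of $M$ on an interval containing $[h^*,h^*+1]$ is the right thing to flag, though existence and uniqueness of $h^*$ is handled separately in the paper's Appendix \ref{huniqueappendix} and is not part of this lemma.
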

\begin{proof}
    Consider the process $(e^{-rt}S_t)_{t\geq0}$ with filtration $(\mathcal{F}_t)_{t\geq0}$. We seek some $h^*$ such that for the measure $\mathbb{Q}:=\mathbb{Q}^{h^*}$, we satisfy the Martingale equation, $\mathbb{E}^\mathbb{Q}\left[e^{-rt}S_t\hspace{0.1cm}\big\vert \mathcal{F}_0\right]=S_0$.
We must then satisfy
\begin{align*}
    S_0=\mathbb{E}^\mathbb{Q}\left[e^{-rt}S_t\hspace{0.1cm}\big\vert \mathcal{F}_0\right]=\mathbb{E}^\mathbb{P}\left[\dfrac{d\mathbb{Q}}{d\mathbb{P}}e^{-rt}S_t\hspace{0.1cm}\Big\vert \mathcal{F}_0\right]=
    S_0e^{-rt}\mathbb{E}^\mathbb{P}\left[\dfrac{e^{(1+h^*)X_t}}{M(h^*,t)}\hspace{0.1cm}\Big\vert \mathcal{F}_0\right]=S_0e^{-rt}\dfrac{M(h^*+1,t)}{M(h^*,t)},
\end{align*}
so that $h^*$ is indeed determined implicitly through the relation (\ref{hstar}).
\end{proof}
We employ the transform in the context of our $VG$ process. Let $M(\cdot,t)$ be the moment generating function of the $VG$ model $X(t;\theta,\sigma,\nu)$. Then, from integration,
\begin{equation}
    M(h,t) = \left(\dfrac{1}{1-\nu\theta h-\nu\frac{\sigma^2}{2}h^2}\right)^\frac{t}{\nu}, \hspace{0.5cm} h_1<h<h_2
\end{equation}
where (considering the region for which the quadratic denominator is non-zero)
\begin{equation}
    h_1 = -\frac{\theta}{\sigma^2} - \sqrt{\frac{\theta^2}{\sigma^4}+\frac{2}{\nu\sigma^2}}\hspace{0.25cm},\hspace{0.25cm}h_2 = -\frac{\theta}{\sigma^2} + \sqrt{\frac{\theta^2}{\sigma^4}+\frac{2}{\nu\sigma^2}}
\end{equation}
\begin{remark}\label{hunique}
    The existence and uniqueness of such a $h^*$ in the $VG$ case is shown in the Appendix \ref{huniqueappendix}.
    As relation (\ref{hstar}) for the $VG$ process, 
    \begin{equation}
        e^{rt} = \left(\dfrac{1-\nu\theta h^* - \nu\frac{\sigma^2}{2}{h^*}^2}{1-\nu\theta (h*+1) - \nu\frac{\sigma^2}{2}(h*+1)^2}\right)^{\frac{t}{\nu}}
    \end{equation}
    can be expressed as a quadratic in $h^*$ (by multiplying the numerator on both sides), one can compute a formula for $h^*$ in terms of the other parameters; however, the resulting formula is very complicated and we omit this here. 
    
    In the case of a Brownian motion $b(t;\theta,\sigma)$ (or considering a degenerate $VG$ process with $\nu=0$) one finds $h^* = (r-\theta)/\sigma^2-1/2$. Furthermore, as all contingent claims can be replicated in the Black-Scholes market, there is a unique EMM for the Brownian motion model, and
    the Esscher transformed measure with this $h^{*}$ gives this unique EMM exactly when we restrict to this case.
\end{remark}
Consider the set of measures $\{\mathbb{Q}^h:h\in(h_1,h_2)\}$ for the $VG$ process. Computing the m.g.f. of the $VG$ model $M(\cdot,t,h)$ with respect to the measure $\mathbb{Q}^h$,
\begin{align}
    M(z,t,h) &= \mathbb{E}^{\mathbb{Q}^h}\left[e^{zX_t}\right]=\int e^{zX_t}\mathbb{Q}^h(dX_t) = \int_{-\infty}^\infty e^{zx}\hat{f}(x,t,h)dx = \int_{-\infty}^\infty \dfrac{e^{(h+z)x}f(x,t)}{M(h,t)}dx \\ & = \dfrac{M(h+z,t)}{M(h,t)} = \left(\dfrac{1-\nu\theta h - \nu\frac{\sigma^2}{2}h^2}{1-\nu\theta (h+z) - \nu\frac{\sigma^2}{2}(h+z)^2}\right)^{\frac{t}{\nu}}\label{Essmgf}.
\end{align}
Having established the moment generating function of the process under $\mathbb{Q}^h$, we may give the following Proposition.
\begin{prop}\label{EsscherVGVG}
    The Esscher transform of the $VG$ process $X(t;\sigma, \nu, \theta)$ is also a $VG$ process $\Tilde{X}(t;\sigma, \Tilde{\nu},\Tilde{\theta})$ where
    \begin{equation}
        \Tilde{\theta} = \theta + h\sigma^2,\hspace{0.5cm} \Tilde{\nu} = \dfrac{\nu}{1-\nu\theta h - \nu\frac{\sigma^2}{2}h^2}.
    \end{equation}
\end{prop}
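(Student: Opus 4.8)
The plan is to identify the Esscher-transformed process through its moment generating function and then invoke the identity theorem for generating functions, exactly as in the proofs of Propositions (\ref{VGcf}) and (\ref{aVGcf}). Because the Esscher change of measure preserves the Lévy property of $(X_t)_{t\geq0}$, it suffices to show that the generating function of $X_t$ under $\mathbb{Q}^h$, already obtained in (\ref{Essmgf}), agrees with a canonical $VG$ generating function. Concretely, I would start from
\[
  M(z,t,h) = \left(\frac{1-\nu\theta h - \nu\frac{\sigma^2}{2}h^2}{\,1-\nu\theta(h+z) - \nu\frac{\sigma^2}{2}(h+z)^2\,}\right)^{t/\nu}
\]
and compare it with the moment generating function of a $VG$ process, i.e. the analytic continuation $u \mapsto -iz$ of the characteristic function (\ref{VGc.f.}). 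Since (\ref{Essmgf}) may be assumed, the remaining work is the algebraic reduction of this ratio to canonical form and the reading-off of the parameters.

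Write $D := 1-\nu\theta h - \nu\frac{\sigma^2}{2}h^2$ for the $z$-independent numerator, which is strictly positive on $(h_1,h_2)$. First I would expand $(h+z)^2 = h^2 + 2hz + z^2$ in the denominator and collect powers of $z$; the constant term is then exactly $D$, which is precisely what lets the numerator cancel after dividing through. This yields
\[
  M(z,t,h) = \left(1 - \frac{\nu}{D}\bigl(\theta + h\sigma^2\bigr)z - \frac{\nu}{D}\,\frac{\sigma^2}{2}\,z^2\right)^{-t/\nu}.
\]
At this point the drift shift $\theta \mapsto \theta + h\sigma^2$ and the rescaling $\nu \mapsto \nu/D$ are visible directly in the coefficients, which is what motivates setting $\tilde\theta = \theta + h\sigma^2$ and $\tilde\nu = \nu/D$. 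Substituting these, the base becomes $1 - \tilde\nu\tilde\theta z - \tilde\nu\frac{\sigma^2}{2}z^2$, which is exactly the base of the $VG$ generating function with drift $\tilde\theta$, volatility $\sigma$ and variance rate $\tilde\nu$.

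The step I expect to be the main obstacle is reconciling the exponent, and it is where the argument needs the most care: the transformed generating function carries the power $-t/\nu$, whereas the canonical $VG$ form with variance rate $\tilde\nu$ and unit-mean-rate subordinator demands the power $-t/\tilde\nu$. Closing this gap requires tracking what the Esscher transform does to the Gamma subordinator itself — in particular whether its mean rate is still $1$ after the transform — and confirming that the normalisation $\mathbb{E}[\gamma_t]=t$ fixed in Definition (\ref{Gamprocess}) is genuinely respected by the identification $(\tilde\theta,\tilde\nu)$ with $\sigma$ held fixed. Once that bookkeeping is settled, the identity theorem applied to $M(z,t,h)$ against the canonical $VG$ generating function finishes the proof; the algebraic reduction above is routine by comparison to this normalisation check.
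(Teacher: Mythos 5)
Your algebraic reduction is exactly the paper's own proof (Appendix \ref{EsscherVGVGappendix}): divide the base of (\ref{Essmgf}) through by the $z$-independent numerator $D = 1-\nu\theta h-\nu\frac{\sigma^2}{2}h^2$, collect powers of $z$, and read off $\tilde\theta=\theta+h\sigma^2$ and $\tilde\nu=\nu/D$. On that part you and the paper coincide step for step.

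The exponent issue you flag is genuine, and the paper does not address it: equation (\ref{numdennum}) carries the power $t/\nu$, whereas the moment generating function of $VG(t;\tilde\theta,\sigma,\tilde\nu)$ in the paper's unit-mean-rate normalisation would carry $t/\tilde\nu$. You correctly diagnose that the resolution lies in what the tilt does to the subordinator, but you leave that step open, so as written your argument (like the paper's) establishes only that the base of the generating function has the canonical quadratic form. To close it: conditionally on $\gamma_t=g$ one has $\mathbb{E}\left[e^{hX_t}\mid\gamma_t=g\right]=e^{(h\theta+h^2\sigma^2/2)g}$, so the Esscher tilt acts on the marginal of $\gamma_t$ as an exponential tilt of a Gamma law, changing its rate from $1/\nu$ to $D/\nu$ while leaving the shape $t/\nu$ fixed; the transformed subordinator therefore has mean rate $1/D\neq 1$, and the conditional law of $X_t$ given the subordinator becomes $N\bigl((\theta+h\sigma^2)g,\sigma^2 g\bigr)$. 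Rescaling back to a unit-mean-rate clock via $\hat\gamma_t=D\,\tilde\gamma_t$ shows the transformed process is $VG\bigl(t;\ (\theta+h\sigma^2)/D,\ \sigma/\sqrt{D},\ \nu\bigr)$: in the canonical parametrisation the variance rate $\nu$ is \emph{unchanged} and it is $\sigma$ and $\theta$ that rescale. Its m.g.f.\ base is $1-\nu\frac{\tilde\theta}{D}z-\nu\frac{\sigma^2}{2D}z^2 = 1-\tilde\nu\tilde\theta z-\tilde\nu\frac{\sigma^2}{2}z^2$ with exponent $-t/\nu$, matching (\ref{Essmgf}) exactly. So the proposition's triple $(\sigma,\tilde\nu,\tilde\theta)$ identifies the coefficients of the m.g.f.\ base rather than the canonical parameters of the process; your instinct that the normalisation $\mathbb{E}[\gamma_t]=t$ is not preserved is correct, and your proof is complete once you supply this bookkeeping, which the paper omits. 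None of this affects the subsequent option-pricing formula, which uses the transformed density $\hat f$ directly rather than the parameter identification.
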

The values are obtained upon rearranging (\ref{Essmgf}) (effectively dividing the numerator and denominator in (\ref{Essmgf}) by the numerator) to find
\begin{equation}\label{numdennum}
    M(z,t,h) = \left(\dfrac{1}{1-\Tilde{\nu}\Tilde{\theta}z-\Tilde{\nu}\frac{\sigma^2}{2}z^2}\right)^\frac{t}{\nu},
\end{equation}
where $\Tilde{\theta},\Tilde{\nu}$ are as above, which is indeed the moment generating function of a $VG$ process with the new parameters. The full calculation can be found in the Appendix \ref{EsscherVGVGappendix}.
\begin{remark}\label{rnparams}
    For the case $h=h^*$, one labels the transformed parameters the risk-neutral parameters. In the $VG$ model, we have three risk-neutral parameters as above; however, for a Brownian motion $b(t;\theta,\sigma)$ the risk neutral parameters are independent of $\theta$, with $\Tilde{\theta} = r-\sigma^2/2$ and $\Tilde{\sigma} = \sigma$ is unchanged. 

    Intuitively this is to be expected - a deterministic drift $\theta t$ should have no affect on the risk-neutral measure (one can formalise this notion with a portfolio replication argument). The fact that there is only one parameter governing the risk-neutral density for a Brownian motion means that the option pricing dynamics are constrained to only vary with one variable $\sigma$ which, as shown in the empirical results of Madan et al. \cite{MadanCarrChang}, proves restrictive. 
    
    However, the $VG$ risk-neutral measure maintains a drift term. In particular, the stochastic time change of the $VG$ process affects the drift term meaning that the drift is now also stochastic - a faster time change causing greater drift, and slower time change causing weaker drift. Empirically, it appears the inclusion of the drift term in the $VG$ model this gives significant reduction in mean squared error over Black-Scholes when comparing the prices S$\&$P500 European options on the CBOE with each model.
\end{remark}

\subsection{Variance Gamma European call option price}
Having found our desired EMM $\mathbb{Q}$, we explore the call option pricing mechanism for a $VG$ process, following a similar analysis to Nzokem (2023) \cite{Nzokem}.
\begin{thm}
    Suppose we model the value of an underlying asset through a process $(S_t)_{t\geq0}$ with filtration $(\mathcal{F}_t)_{t\geq0}$. From the risk-neutral valuation principle, given a spot price $S_0$ at time $0$, the price of a European call option $\mathcal{C}(S_0; K,t)$ for a strike price $K$ and maturity $t$ is given by
    \begin{equation}\label{call}
   \mathcal{C}(S_0; K,t) = e^{-rt}\mathbb{E}^\mathbb{Q}\left[\left(S_t-K\right)_+\hspace{0.1cm}\big\vert\mathcal{F}_0\right],
    \end{equation}
    where the expectation is taken under the risk neutral measure $\mathbb{Q}$. If the price $S_t$ follows a geometric $VG$ process such that $S_t=S_0e^{X_t}$ for $X_t = X(t;\theta,\sigma,\nu)$, the European call option price is given by
    \begin{equation}\label{Calloptionformula:)}
        \mathcal{C}(S_0; K,t) = S_0\left[1-\hat{F}\left(\log\left(\frac{K}{S_0}\right),t,h^*+1\right)\right] - Ke^{-rt}\left[1-\hat{F}\left(\log\left(\frac{K}{S_0}\right),t,h^*\right)\right],
    \end{equation}
    where $\hat{F}{(x,t,h)}$ is the cumulative function of the Esscher transformed $VG(t;\theta,\sigma,\nu)$ process,
    \begin{equation}
        \hat{F}(x,t,h) = \int_{-\infty}^x \hat{f}(\xi,t,h) d\xi.
    \end{equation}
\end{thm}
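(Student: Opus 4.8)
The first displayed formula is the risk-neutral valuation principle itself, which I would take as given: under an EMM $\mathbb{Q}$ the discounted price $(e^{-rt}S_t)_{t\geq0}$ is a martingale, so the fair value of any contingent claim is the discounted conditional expectation of its terminal payoff, and here $\mathcal{F}_0$ is trivial so the conditioning drops out. The substantive content is deriving the explicit formula \eqref{Calloptionformula:)}, and the plan is to evaluate $e^{-rt}\mathbb{E}^\mathbb{Q}[(S_t-K)_+]$ directly by exploiting the structure of the Esscher-transformed density under $\mathbb{Q}=\mathbb{Q}^{h^*}$.

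First I would rewrite the payoff using the indicator of the exercise region. Since $S_t=S_0e^{X_t}$, the option finishes in the money precisely when $X_t>\log(K/S_0)=:k$, so $(S_t-K)_+=(S_0e^{X_t}-K)\mathbf{1}_{\{X_t>k\}}$. Splitting the expectation into two pieces gives $e^{-rt}\big[S_0\,\mathbb{E}^\mathbb{Q}(e^{X_t}\mathbf{1}_{\{X_t>k\}})-K\,\mathbb{Q}(X_t>k)\big]$. The second piece is immediate: under $\mathbb{Q}=\mathbb{Q}^{h^*}$ the density of $X_t$ is $\hat{f}(\cdot,t,h^*)$, so $\mathbb{Q}(X_t>k)=1-\hat{F}(k,t,h^*)$, which produces the $-Ke^{-rt}[1-\hat{F}(k,t,h^*)]$ term verbatim.

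The key step is the first piece, $\mathbb{E}^\mathbb{Q}(e^{X_t}\mathbf{1}_{\{X_t>k\}})=\int_k^\infty e^x\,\hat{f}(x,t,h^*)\,dx$. Substituting the definition $\hat{f}(x,t,h^*)=e^{h^*x}f(x,t)/M(h^*,t)$, the weight $e^x$ merges with $e^{h^*x}$ to give $e^{(h^*+1)x}$; recognizing that $e^{(h^*+1)x}f(x,t)=M(h^*+1,t)\,\hat{f}(x,t,h^*+1)$ shows that multiplying the payoff by $e^x$ simply shifts the Esscher parameter from $h^*$ to $h^*+1$. Hence the integral equals $\frac{M(h^*+1,t)}{M(h^*,t)}\big[1-\hat{F}(k,t,h^*+1)\big]$. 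At this point I would invoke Lemma \ref{hstarlemma}, whose defining relation $e^{rt}=M(h^*+1,t)/M(h^*,t)$ collapses the m.g.f.\ ratio to exactly $e^{rt}$, cancelling the discount factor and leaving $S_0[1-\hat{F}(k,t,h^*+1)]$.

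Assembling the two terms yields the claimed formula. The only real subtlety — which I would flag as the conceptual crux rather than a computational obstacle — is recognizing the $e^x$-weighting as an Esscher shift $h^*\mapsto h^*+1$; this is the direct analogue of the change-of-numeraire trick that produces the two distinct probabilities ($N(d_1)$ versus $N(d_2)$) in the Black--Scholes formula, and it is precisely the martingale condition of Lemma \ref{hstarlemma} that makes the bookkeeping close up cleanly so that the two survival functions appear at the adjacent parameters $h^*+1$ and $h^*$.
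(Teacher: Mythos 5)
Your proposal is correct and follows essentially the same route as the paper's proof: splitting the expectation over the exercise region, identifying the $K$-term as a survival probability under $\mathbb{Q}^{h^*}$, and recognizing the $e^x$-weighting as an Esscher shift $h^*\mapsto h^*+1$ whose m.g.f.\ ratio collapses to $e^{rt}$ via Lemma \ref{hstarlemma}. The only difference is cosmetic notation ($k=\log(K/S_0)$ versus the paper's $k=K/S_0$ with lower limit $\log k$).
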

\begin{proof}
    From equation (\ref{call}) we may write 
    \begin{equation}
        \mathcal{C}(S_0; K,t) = e^{-rt}\mathbb{E}^\mathbb{Q}\left[\left(S_t-K\right)_+\hspace{0.1cm}\big\vert\mathcal{F}_0\right] = e^{-rt}\mathbb{E}^\mathbb{Q}\left[\left(S_0e^{X_t}-K\right)_+\hspace{0.1cm}\big\vert\mathcal{F}_0\right].
    \end{equation}
    Our risk-neutral measure $\mathbb{Q}$ has a density $\hat{f}(x,t,h^*)$ and we write
    \begin{equation}
        \mathcal{C}(S_0; K,t) = e^{-rt}\int (S_0e^{X_t}-K)_+\mathbb{Q}(dX_t) = e^{-rt}\int_{-\infty}^\infty (S_0e^{\xi}-K)_+\hat{f}(\xi,t,h^*)d\xi.
    \end{equation}
    Considering the integrand is non-zero for $\xi\in\left(\log k,\infty\right)$ where $k=K/S_0$, we may split the integral.
    \begin{equation}
        e^{-rt}\int_{-\infty}^\infty (S_0e^{\xi}-K)_+\hat{f}(\xi,t,h^*)d\xi = e^{-rt}S_0\int_{\log k}^\infty e^{\xi}\hat{f}(\xi,t,h^*)d\xi-Ke^{-rt}\int_{\log k}^\infty \hat{f}(\xi,t,h^*)d\xi.
    \end{equation}
    The second integral term is already in the desired form, but the first integrates $e^\xi$ against the risk-neutral density. We employ our results on the moment generating function and the implicit relation for $h^*$ from Lemma (\ref{hstarlemma}). From the definition of the Esscher transform $\hat{f}$,
    \begin{equation}
        \int_{\log k}^\infty e^{\xi}\hat{f}(\xi,t,h^*)d\xi = \int_{\log k}^\infty e^{\xi}\dfrac{e^{h^*\xi}f(\xi,t)}{M(h^*,t)}d\xi = \dfrac{M(h^*+1,t)}{M(h^*,t)}\int_{\log k}^\infty \dfrac{e^{(h^*+1)\xi}f(\xi,t)}{M(h^*+1,t)}d\xi.
    \end{equation}
    Recall from the implicit definition for $h^*$ from (\ref{hstar}) that the factor outside the integral is equal to $e^{rt}$. The integrand is now exactly $\hat{f}(\xi,t,h^*+1)$, and so
    \begin{equation}
        \int_{\log k}^\infty e^{\xi}\hat{f}(\xi,t,h^*)d\xi = e^{rt}\int_{\log k}^\infty \hat{f}(\xi,t,h^*+1)d\xi.
    \end{equation}
    Then indeed
    \begin{equation}
        \mathcal{C}(S_0;K,t) = S_0\int_{\log k}^\infty \hat{f}(\xi,t,h^*+1)d\xi-Ke^{-rt}\int_{\log k}^\infty \hat{f}(\xi,t,h^*)d\xi,
    \end{equation}
    which yields the result upon integration.
\end{proof}
\begin{remark}
    The formula demonstrates strong resemblance with the equivalent Black-Scholes European call option pricing formula. 
    The first term represents the present value $S_0$ of holding the stock if the option is exercised, weighted against the probability that the future value $S_t$ finishes in the money. The second term represents the cost $Ke^{-rt}$ of exercising the option weighted against the probability that the future value $S_t$ exceeds the strike. 
\end{remark}

\section{Comparison with Black-Scholes}\label{Section5}
Having obtained an option pricing mechanism, we explore properties of the $VG$ call option price and compare with the well-documented Black-Scholes approach

Suppose that the price of an underlying asset follows a geometric $VG$ process, with a fixed strike price of $100$ for a range of European call options with a fixed 2 week maturity. The corresponding call option price for each spot price of the underlying is given in Figure (\ref{RNVGplot}) below, varying $\sigma_{VG}$.

\begin{figure}[h!]
    \centering
    \includegraphics[width=0.5\linewidth]{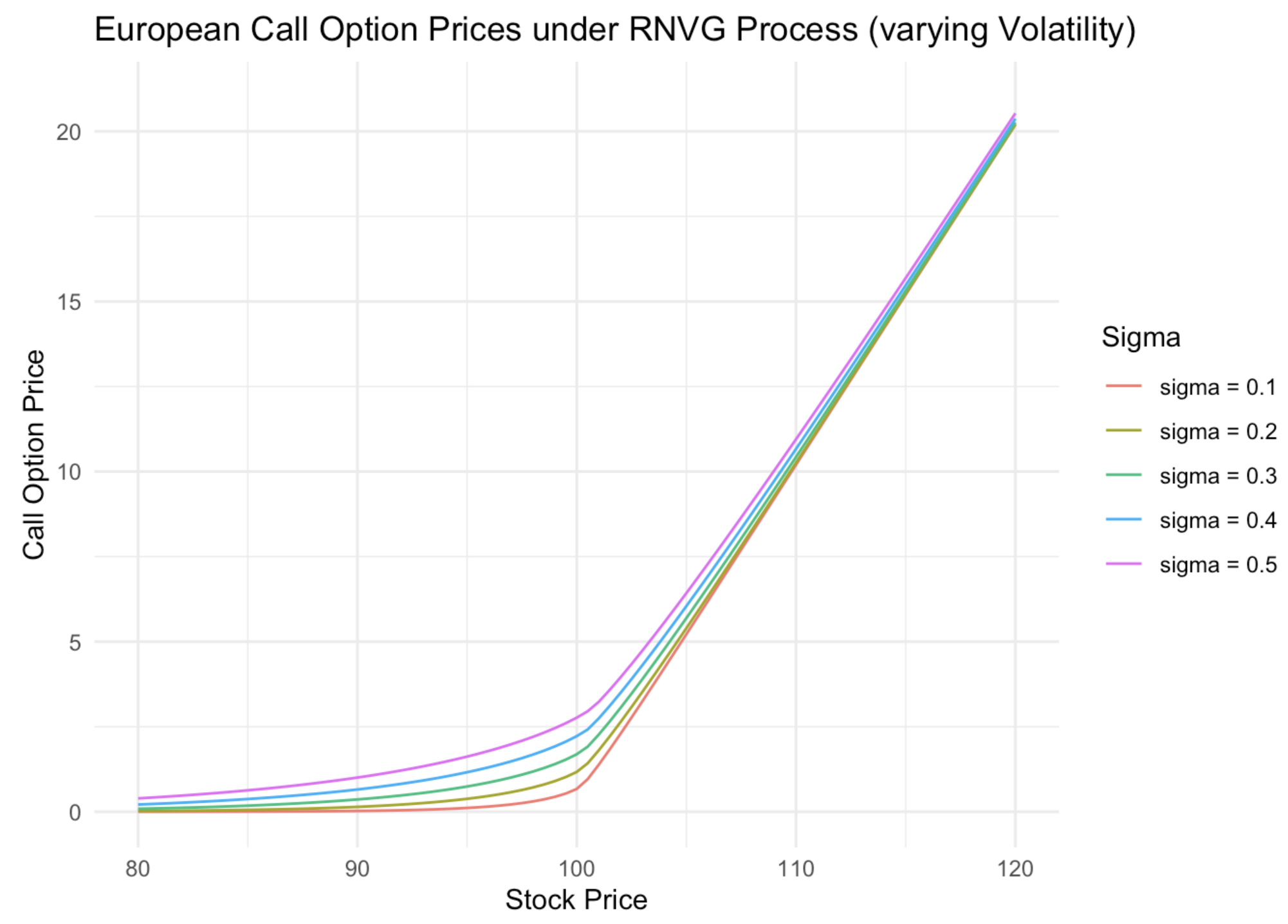}
    \caption{Option prices under a $VG$ model. $r = 0.05, \theta = -0.1, \nu = 0.2$, time to maturity 2 weeks.}
    \label{RNVGplot}
\end{figure}

\begin{remark}
    We obtain the familiar hockey stick shape for the call option price with greater call option values for greater volatility (caused by a greater $\sigma$ parameter). This agrees with our expectation - for out of the money spot prices, the probability that the option will be exercised is low and the corresponding option has lower value. The reverse applies for in the money spot prices where there is greater probability that the option will be exercised. A higher volatility (through greater $\sigma$) corresponds to greater expected fluctuation in the stock price and so greater expected returns from the option.
\end{remark}

We now compare with the Black-Scholes sub-model price for the equivalent option in Figure (\ref{VG-BS}) (where the $VG$ parameters are the same while varying $\sigma_{VG}$). The Black-Scholes model volatility $\sigma_{BS}$ is chosen to match the volatilities of the corresponding $VG$ model using the moment calculations of section \ref{SectionVGProcess}, so that $\theta_{VG}^2\nu_{VG}+\sigma^2_{VG}=\sigma^2_{BS}$. Recall $\theta_{VG}$ represents the average drift of the $VG$ process, while $\nu_{VG}$ represents the variance of the Gamma process subordinating the Brownian motion for the $VG$ process.

Figure \ref{VG-BS}(a) demonstrates that the Variance Gamma and Black-Scholes models tend to generally price options in a similar manner, following the typical hockey stick shape. However, Figure \ref{VG-BS}(b) demonstrates a clear trend to the difference in the option prices between the two models. For spot prices further in or out of the money, the Variance Gamma option price is significantly higher than the Black-Scholes, whereas for spot prices close to at the money the Black-Scholes option price is much higher, with increasing volatility exaggerating the difference in both cases.
\begin{figure}[h!]
    \centering
    \subfloat[\centering]{{\includegraphics[width=0.45\linewidth]{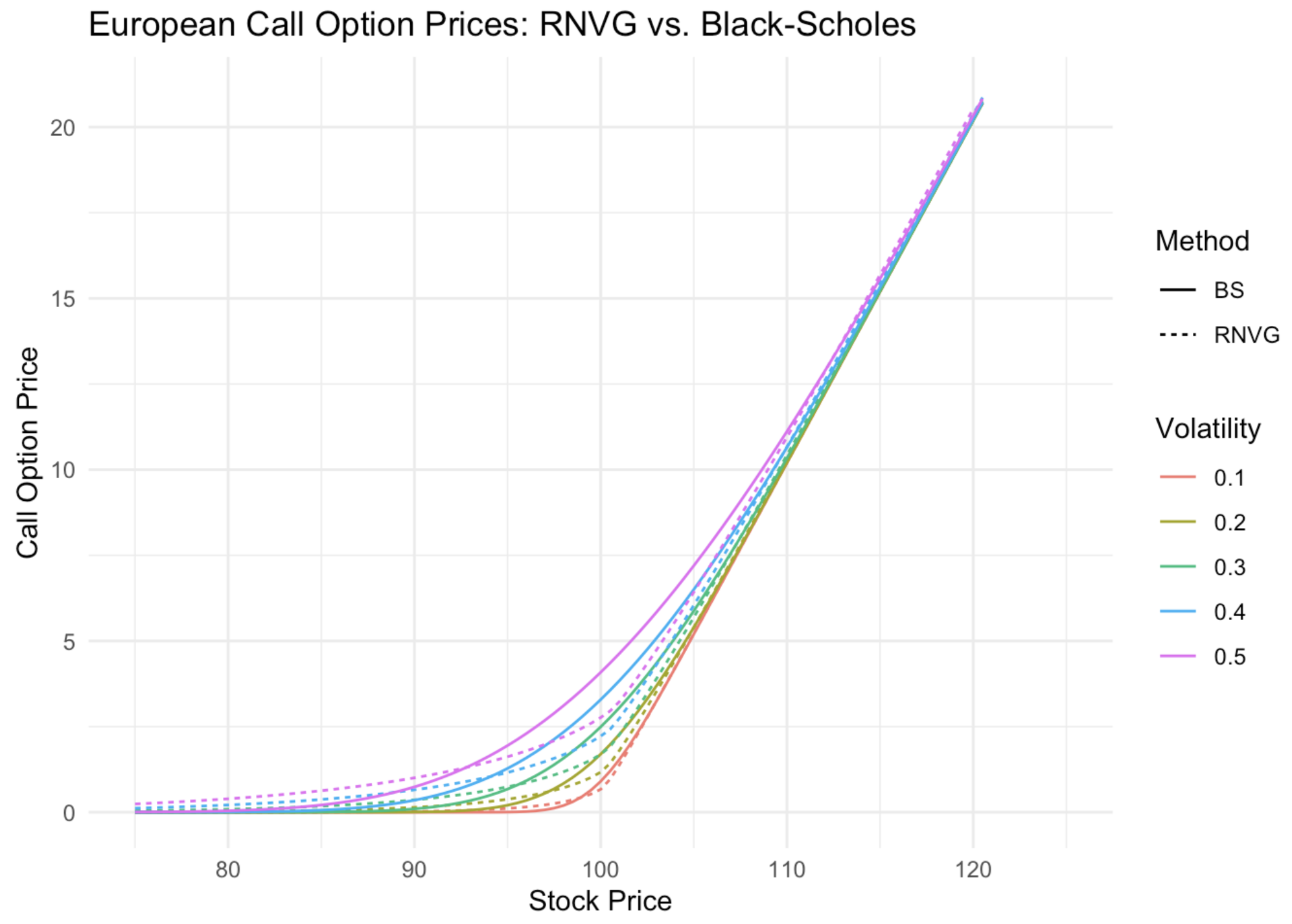} }}
    \qquad
    \subfloat[\centering]{{\includegraphics[width=0.45\linewidth]{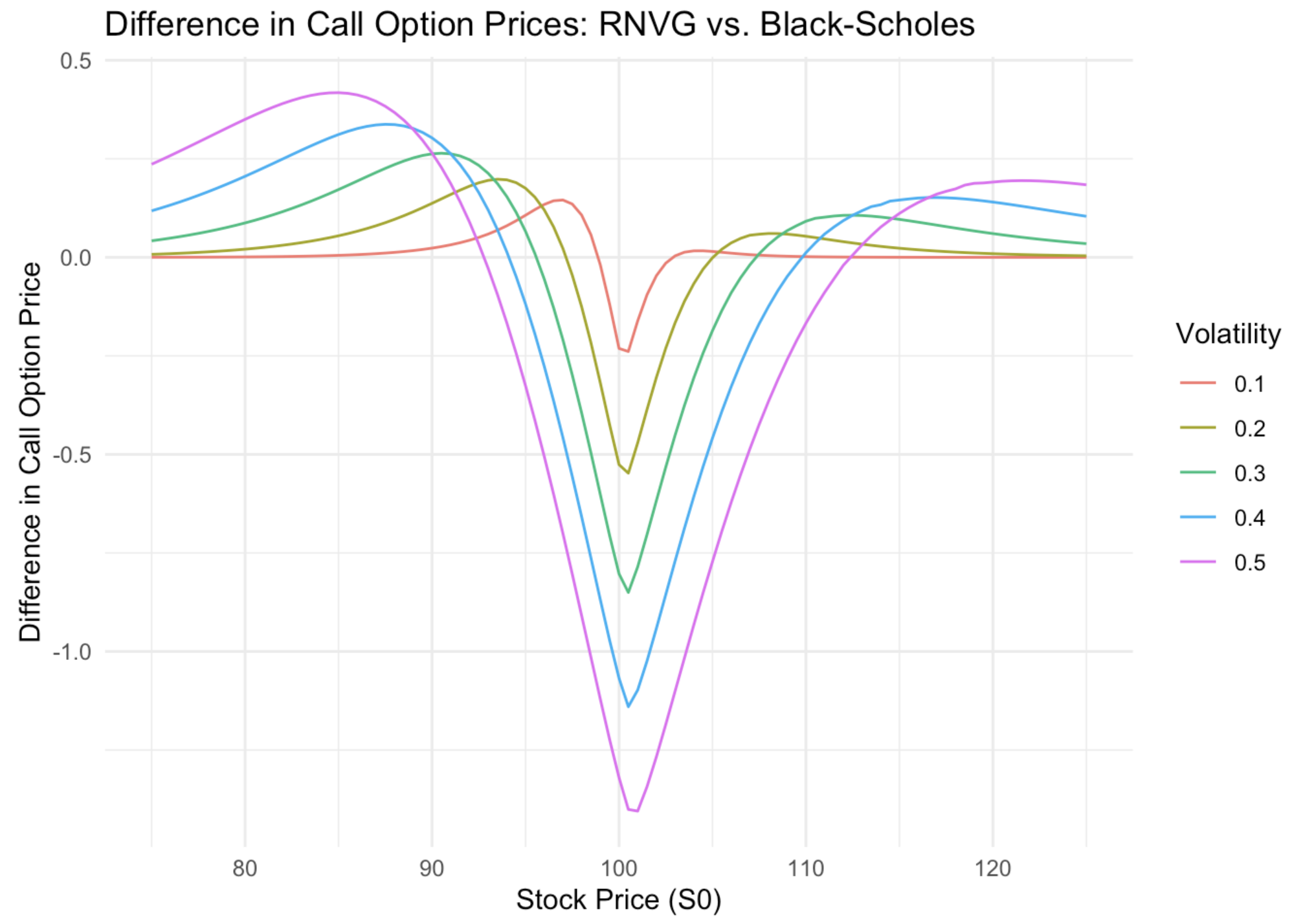} }}
    \caption{Variance Gamma - Black-Scholes difference in call option price.}
    \label{VG-BS}
\end{figure}

\begin{remark}
    Once again this is expected - as explained in the previous sections, the more general Variance Gamma distribution is leptokurtic exhibiting greater kurtosis than the Gaussian with sharper peaks about the median and heavier tails.
    
    For spot prices close to at the money, one would not expect the price of the underlying asset to deviate as much under a Variance Gamma model compared to Black-Scholes - the sharper peak around the centre of the density function attributes greater probability to the price deviating minimally. However, the Variance Gamma model assigns greater probability to a larger price deviation than Black-Scholes thereby increasing the value of the corresponding option for spot prices far in or out of the money.
\end{remark} 

\begin{figure}[h!]
    \centering
    \includegraphics[width=0.5\linewidth]{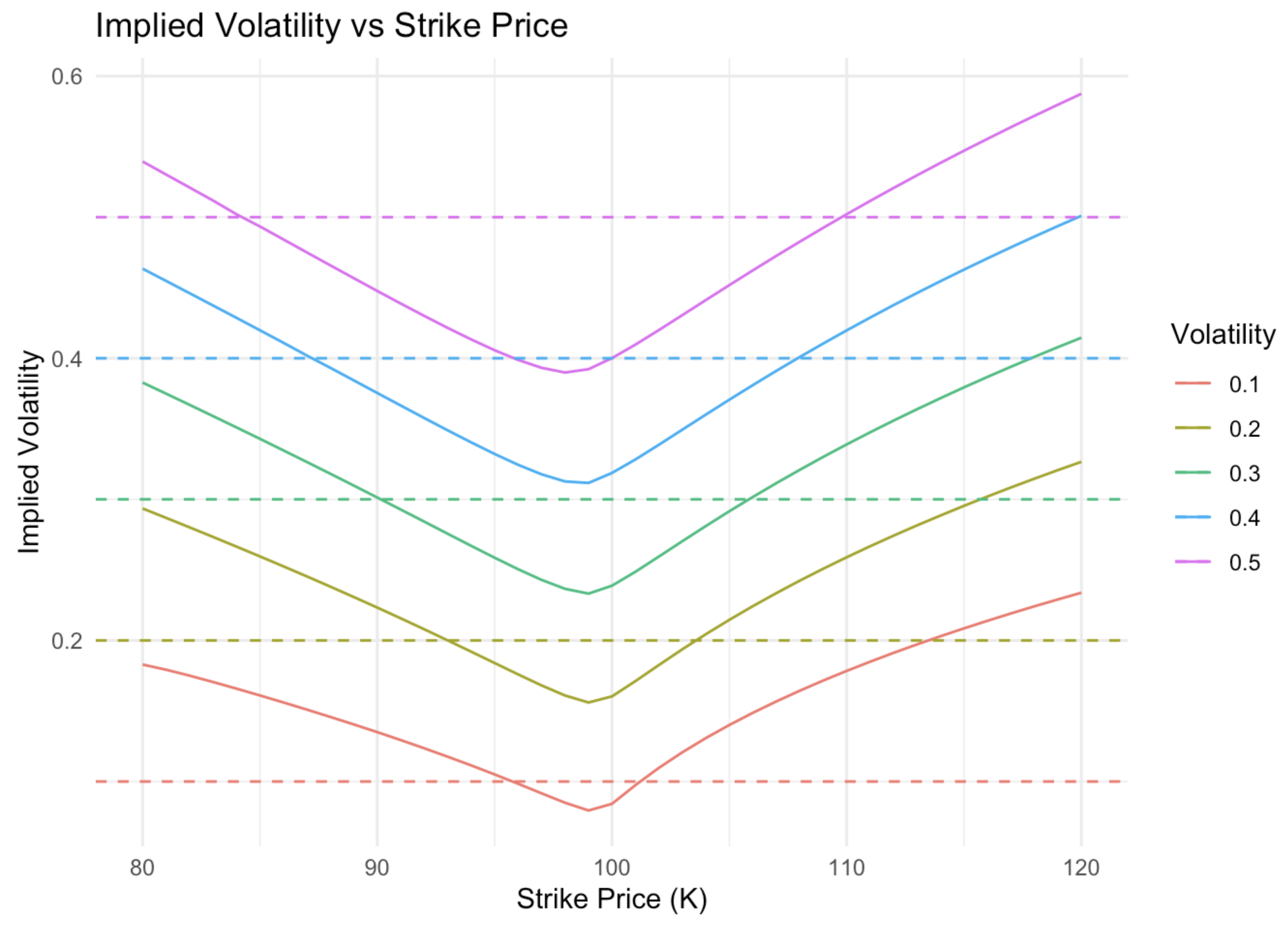}
    \caption{Implied Black-Scholes volatility from Variance Gamma simulated prices}
    \label{Simimpliedvolatility}
\end{figure}

From this result, one may question whether fitting a Black-Scholes implied volatility to prices simulated from a general Variance Gamma model with $\nu$ significantly greater than $0$ will lead to a volatility smile, as the phenomenon is often attributed to greater kurtosis in  log-returns data. Figure (\ref{Simimpliedvolatility}) shows that this is exactly the case - we invert the Black Scholes formula at each strike for $VG$ calculated prices to calculate the implied volatility. The spot price for Figure (\ref{Simimpliedvolatility}) is fixed at 100 units, and the other Variance Gamma parameters are as above (notably $\nu = 0.2$ significantly greater than $0$). We also observe that increasing the volatility of the Variance Gamma model exaggerates the smile.

Empirically, it does appear that the $VG$ model demonstrates significant improvement over the nested Black-Scholes when fitted to options data. This is shown in section 5 of Madan et al. \cite{MadanCarrChang} where the authors fit both $VG$ and Black-Scholes models on a weekly basis for 143 weeks of S$\&$P500 data, rejecting a Black-Scholes null hypothesis at the 5$\%$ level for $93.7\%$ of weeks, and at the 1$\%$ level for $91.6\%$ of weeks.

It also appears that the $VG$ model's additional parameters are capable of flattening the empirical implied volatility smile - this is shown in section 6 of Madan et al. \cite{MadanCarrChang} which deals with the pricing performance of the $VG$ model against the nested Black-Scholes. The authors employ regression analysis through orthogonality tests for both models - these test whether there are any consistent patterns in pricing errors due to variables such as moneyness, maturity, and the interest rate. The authors find that for the Black-Scholes and symmetric $VG$ models, the pricing error has a high moneyness dependency for the options in the data (the symmetric $VG$ does still perform better than Black-Scholes); however, the pricing error for the general $VG$ model (with drift) is mostly free of this dependency.

\section{Empirical Results}
\subsection{Data for the study}
For our study we have used S$\&$P500 futures traded at the Chicago Board Options Exchange (CBOE). The S$\&$P 500 futures options we consider are European call options traded with a 7-day maturity between August 2022 and August 2023.

The data for the study were obtained from the Wharton Research Data Services and include all transaction option prices from August 31, 2022 to August 31 2023. Closing prices on index futures and the level of the spot index were also available from the Federal Reserve Economic Data. All options contracts were viewed as written on the underlying spot index. There were 46135 options prices over the 1 year period considered in the analysis.

\subsection{Daily Log Returns}
We begin the analysis by estimating the parameter values of the statistical densities for daily log returns over the period above for the $VG$ model and the nested Gaussian sub-model. The data employed were the 262 daily observations of log spot price of the S$\&$P500 Index covering the period. For the stock price dynamics under the $VG$ model, we employ the density in equation \ref{VG density} while the stock price dynamics underlying the Gaussian sub-model is of course a Gaussian density with mean $\theta$ and standard deviation $\sigma$. Recall from Remark (\ref{Normalnest}) that this Gaussian sub-model is nested in the $VG$ model through the degenerate case where $\nu=0$ which permits us to perform an approximate likelihood ratio test on the results. We employ maximum likelihood estimation for all parameters and present the estimated parameter values in Table (\ref{Daily table}).
\begin{table}[h!]
\begin{center}
\begin{tabular}{c c c}
 \hline
 Estimated Parameter & Gaussian Model & $VG$ Model \\
 \hline\hline
 $\theta$ & 0.0005925507 & -0.001323872 \\
 $\sigma$ & 0.01141282 & 0.01201207 \\ 
 $\nu$ & $\cdot$ & 0.02942378 \\
 $\log \mathcal{L}$ & 1004.44275 & 1012.215\\
 NOBS & 252 & 252 \\
 \hline
\end{tabular}
\caption{Fitted S$\&$P500 daily log returns}
\label{Daily table}
\end{center}
\end{table}

The estimated $\nu$ parameter of the $VG$ model $\nu = 0.02942378$ appears to be significantly different from zero which suggests we should expect to reject a Gaussian null hypothesis in favour of a $VG$ alternative. Indeed, with the use of Wilks' theorem we can perform an approximate chi-squared test, in that under the null hypothesis, the statistic
\begin{equation}\label{2xLRT}
    2 \left(\log \mathcal{L}_{VG} - \log \mathcal{L}_{\text{Gaussian}}\right)
\end{equation}
is approximately chi-squared distributed with 1 degree of freedom (as the $VG$ model has one additional parameter). Here the statistic is equal to 15.5445 which strongly rejects\footnote{this gives a p-value less than 0.0001 under the $\chi^2_1$ distribution.} the Gaussian hypothesis in favour of the $VG$ model.

\subsection{Option pricing}
For our theoretical $VG$ option pricing mechanism, we made use of the Esscher transform to obtain the risk-neutral measure for the $VG$ model. We employ the density of this measure for a similar parameter estimation, this time estimating the parameters of both the risk-neutral Black-Scholes density and risk-neutral $VG$ density used to appropriately price options. We fit both densities to the option price data on a weekly basis via a maximum-likelihood method.

We present the average estimated parameter values of the risk neutral densities, along with
their standard deviation across 52 weeks, as well as the minimum and maximum estimated values over the period in Table (\ref{optionsstats}). We also include the log-likelihood statistics for each model, again showing the average and standard deviation across the weeks, as well as the minimum and maximum values over the period.
\begin{table}[h!]
    \centering
    \begin{tabular}{c c c c c}
    \hline
    Parameter & Mean & Standard deviation & Minimum & Maximum \\ [1ex]
    \hline
    \multicolumn{5}{l}{\emph{Black Scholes}} \\
    $\sigma$ & 0.18010 & 0.03643 & 0.03128 & 0.28604 \\
    $\log\mathcal{L}$ & 0.00422 & 0.00239 & 0.00182 & 0.01723 \\[2ex]
    \multicolumn{5}{l}{\emph{Variance Gamma}} \\
    $\sigma$ & 0.1793376 & 0.0505 & 0.02983 & 0.30049 \\
    $\theta$ & 0.03015743 & 0.51320 & -1.57233 & 1.65062 \\
    $\nu$ & 0.01231924 & 0.01698 & 0.00528 & 0.02093 \\
    $\log\mathcal{L}$ & 0.00523 & 0.00301 & 0.00191 & 0.01799
\end{tabular}
    \caption{Weekly parameter estimation of Black-Scholes and $VG$ model S$\&$P500 options}
    \label{optionsstats}
\end{table}

For each week we construct an approximate chi-squared test in the same form as equation (\ref{2xLRT}), though this time under a Black-Scholes null hypothesis the statistic in (\ref{2xLRT}) now follows a $\chi^2_2$ distribution, as the risk-neutral $VG$ density has 2 further degrees of freedom ($\theta,\nu$) over the risk-neutral Black-Scholes. From the data, the Black Scholes model is rejected at the 5$\%$ level\footnote{noting that the $\chi^2_2$ has a critical value of $0.00011$ at the $5\%$ level, and $0.02010$ at the $1\%$ level} for $48$ out of the $52$ weeks ($92.3\%$) in favour of the $VG$ model, and rejected at the $1\%$ level for $45$ out of the $52$ weeks ($86.5\%$) in favour of the $VG$ model.

\section{Summary}\label{Section6}
This paper explores the concept of stochastic volatility in modelling stock-price dynamics, first presenting results on the Laplace distribution as a Gaussian variance-mixture and generalising this idea to find the powerful variance gamma model as a Gamma time-subordinated Brownian motion to price European call options.

We propose a more efficient form of volatility estimation when one assumes that the log-returns from an underlying follow a Laplace distribution. Notably, the mean absolute absolute deviation about the median proves a more efficient estimator for the volatility parameter.

The Laplace distribution is generalised to consider the $VG$ distribution as a Gaussian with gamma distributed variance. We then characterise the $VG$ process as a time-subordinated Brownian motion with a Gamma process subordinator.

A general European call option pricing mechanism is formulated via the use of the Esscher transform to find a corresponding EMM measure, and we investigate the properties of the EMM measure for the $VG$ model. This allows us to compute the European call option price under the $VG$ model.

The theoretical differences between the more general $VG$ model and the nested Black-Scholes are explored - the extra degrees of freedom for the $VG$ allow the more general model to better capture the excess kurtosis of returns data, and provide a possible explanation for the presence of implied volatility smiles when fitting the Black-Scholes model. Referring to works comparing the empirical performance of the $VG$ model against the Black-Scholes both in modelling daily log returns and the appropriate pricing of options for various markets, a majority of studies find that the Black-Scholes model is strongly rejected in favour of the more general $VG$ model.

\section*{Acknowledgements}

We would like to acknowledge the Imperial-MIT International Research Opportunities Programme without which this research collaboration would not have been possible. Rohan Shenoy would like to acknowledge the financial support provided by the Mathematics Department at Imperial College London, the Imperial International Relations Office, and the UK Government's Turing Scheme.

\bibliographystyle{unsrt}
\bibliography{Bibliography.bib}

\begin{thebibliography}{10}

\bibitem{MadanSeneta}
Dilip Madan and Eugene Seneta.
\newblock The variance gamma (v.g.) model for share market returns.
\newblock {\em The Journal of Business}, 1990.

\bibitem{blackscholes}
Fischer Black and Myron Scholes.
\newblock The pricing of options and corporate liabilities.
\newblock {\em Journal of political economy}, 81(3):637, 1973.

\bibitem{Praetz}
Peter~D. Praetz.
\newblock The distribution of share price changes.
\newblock {\em The Journal of Business}, 1972.

\bibitem{KlebanerLandsman}
Fima~C. Klebaner and Zinoviy Landsman.
\newblock Option pricing for log-symmetric distributions of returns.
\newblock {\em Methodology and Computing in Applied Probability}, 2009.

\bibitem{MadanMilne}
Frank Milne and Dilip Madan.
\newblock Option pricing with v. g. martingale components.
\newblock {\em Queen’s University}, 1991.

\bibitem{MadanCarrChang}
Dilip Madan, Peter Carr, and Eric Chang.
\newblock The variance gamma process and option pricing.
\newblock {\em European Finance Review}, page 79–105, 1998.

\bibitem{BoyleMcDougall}
Patrick Boyle and Jesse McDougall.
\newblock {\em Chapter 11. Volatility Smiles}, pages 127--136.
\newblock De Gruyter, Berlin, Boston, 2019.

\bibitem{PenaRubioSerna}
Ignacio Peña, Gonzalo Rubio, and Gregorio Serna.
\newblock Why do we smile? on the determinants of the implied volatility function.
\newblock {\em Journal of Banking $\&$ Finance, Volume 23, Issue 8}, pages 1151--1179, 1999.

\bibitem{Asrabadi01011985}
Badiollah~R. Asrabadi.
\newblock The exact confidence interval for the scale parameter and the mvue of the laplace distribution.
\newblock {\em Communications in Statistics - Theory and Methods}, 14(3):713--733, 1985.

\bibitem{Laplacebook}
Samuel Kotz, Tomaz~J. Kozubowski, and Krzysztof Podgórski.
\newblock {\em The Laplace Distribution and Generalizations}.
\newblock Birkhäuser Boston, MA, 2001.

\bibitem{fischer2023variancegammadistributionreview}
Adrian Fischer, Robert~E. Gaunt, and Andrey Sarantsev.
\newblock The variance-gamma distribution: A review, 2023.

\bibitem{RenyiEntropy}
Alfréd Rényi.
\newblock On measures of entropy and information, 1961.

\bibitem{Jaynes}
E.~T. Jaynes.
\newblock Information theory and statistical mechanics.
\newblock {\em Phys. Rev.}, 106:620--630, May 1957.

\bibitem{Applebaum}
David Applebaum.
\newblock {\em Lévy Processes and Stochastic Calculus}.
\newblock Cambridge Studies in Advanced Mathematics. Cambridge University Press, 2004.

\bibitem{Esscher}
Frederik Esscher.
\newblock On the probability function in the collective theory of risk.
\newblock {\em Scandinavian Actuarial Journal}, 1932(3):175--195, 1932.

\bibitem{Gerber1995OptionPB}
Hans~U. Gerber and Shiu E.S.W.
\newblock Option pricing by {Esscher} transforms.
\newblock {\em Insurance Mathematics \& Economics}, 3:287, 1995.

\bibitem{Yor1978}
Marc Yor and José~de Sam~Lazaro.
\newblock Sous-espaces denses dans $l^1$ ou $h^1$ et représentation des martingales.
\newblock {\em Séminaire de probabilités de Strasbourg}, 12:265--309, 1978.

\bibitem{esscherentropy}
Andrii Andrusiv and Hans-Jürgen Engelbert.
\newblock On the minimal entropy martingale measure for lévy processes.
\newblock {\em Stochastics}, 92(8):1223--1243, 2020.

\bibitem{Nzokem}
Aubain~Hilaire Nzokem.
\newblock Pricing european options under stochastic volatility models: Case of five-parameter variance-gamma process.
\newblock {\em Journal of Risk and Financial Management}, 2023.

\bibitem{KaganLinnikRao}
A.~M. Kagan, I.U.~V. Linnik, and C.~Radhakrishna Rao.
\newblock {\em Characterization problems in mathematical statistics}.
\newblock Wiley series in probability and mathematical statistics. Wiley, 1973.

\bibitem{Levycontinuity}
B.~Fristedt and L.~Gray.
\newblock {\em A Modern Approach to Probability Theory}.
\newblock Birkhauser, Boston, MA, 1997.

\end{thebibliography}

\section{Appendix}

\subsection{Proof of Lemma \ref{thetaestimate}}\label{thetaestimateappendix}
\begin{proof}
    Assuming $\theta$ is given, we differentiate the log-likelihood with respect to $s$,
    \begin{equation}
        \frac{\partial}{\partial s}l(x_1,\dots,x_n; s, \theta) = -\frac{n}{ s}+\frac{1}{ s^2}\sum_{i=1}^{n}|x_{i}-\theta|.
    \end{equation}
    Setting this equal to $0$ we do find $\hat{ s}$ is given by (\ref{sigmahat}). Finding the variance of the estimator,
    \begin{equation}
        \text{Var }\hat{s} = \text{Var}\left[\frac{1}{n}\sum_{i=1}^n|x_i-{\theta}|\right] = \frac{1}{n}\left(\mathbb{E}\left[|X_i-\theta|^2\right] - \left(\mathbb{E}\left[|X_i-\theta|\right]\right)^2 \right)\\
        = \frac{1}{n}(2 s^2- s^2) = \frac{ s^2
        }{n}.
    \end{equation}
    This corresponds with the Cramér-Rao lower bound - indeed, calculating the Fisher information
    \begin{align*}
        \mathbb{I}_n(f) = -n\mathbb{E}\left[\frac{\partial^2}{\partial  s^2}\log f(x)\right] = -n\mathbb{E}\left[\frac{1}{ s^2}-\frac{2|X-\theta|}{ s^3}\right] = -n\left(\frac{1}{ s^2} - \frac{2 s}{ s^3}\right) = \frac{n}{ s^2},
    \end{align*}
    (where we have used (\ref{centralabsolutemoment}) for the central absolute moment) giving the result on taking inverses.
\end{proof}

\subsection{Proof of Lemma \ref{sigmaestimate}}\label{sigmaestimateappendix}
\begin{proof}
    From the strong law of large numbers, $\frac{1}{n}\sum_{i=1}^n(X_i-\theta)^2\overset{p}{\to}\mathbb{E}\left((X-\theta)^2\right) = 2\sigma^2$
    so that by a continuity theorem,
    \begin{equation}
        \sqrt{\frac{1}{2}\left(\frac{1}{n}\sum_{i=1}^n(X_i-\theta)^2\right)}\overset{p}{\to}\sqrt{\frac{1}{2}\left(2\sigma^2\right)} = \sigma
    \end{equation}
    demonstrating the consistency. To establish asymptotic normality, we first note
    \begin{equation}
        \mathbb{E}\left[(X-\theta)^2\right] = 2\sigma^2\hspace{0.5cm},\hspace{0.5cm}\text{Var}\left[(X-\theta)^2\right]=\mathbb{E}[X^4] - \mathbb{E}[X^2]^2 = 20\sigma^4,
    \end{equation}  
    where the moments are calculated from (\ref{centralmoments}). Then from the central limit theorem,
    \begin{equation}
        \sqrt{n}(\Tilde{\sigma}-\sigma)\overset{d}{\to}N(0,20\sigma^4).
    \end{equation}
    Applying the delta method for $g(x) = \sqrt{x/2}$ so that $g'(x) = \frac{1}{2\sqrt{2x}}$, 
    \begin{equation}
        \sqrt{n}(\Tilde{\sigma}-\sigma)\overset{d}{\to}N\left(0,g'(2\sigma^2)20\sigma^4\right)\overset{d}{=}N\left(0,\frac{5}{4}\sigma^2\right).
    \end{equation}
\end{proof}

\subsection{Maximum Entropy distributions}\label{Entropyappendix}

For a random variable $X$ with density $f$, one defines the entropy of $X$ as $H(X) = \mathbb{E}\left[-\log{f(X)}\right]$.
The following theorem from Kagan et al. \cite{KaganLinnikRao} provides a framework for finding the density of the maximum entropy distribution, which maximises this expectation.

\begin{thm}\label{entthm}
    Let $X$ be a random variable with density $p(x)$ and support $(a,b)$ (where $a,b$ can be $\pm\infty$). Let $h_1,\dots,h_n$ be integrable functions on $(a,b)$ satisfying for different some constants $g_i$,
    \begin{equation}\label{his}
        \int_{a}^{b}h_i(x)p(x)dx = g_i.
    \end{equation}
    Then the maximum entropy distribution is attained for densities of the form
    \begin{equation}\label{entform}
        p(x) = e^{a_0+a_1h_1(x)+\cdots a_nh_n(x)}
    \end{equation}
    where the above constraints are all satisfied.
\end{thm}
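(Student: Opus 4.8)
The plan is to establish global optimality via the information-theoretic (Gibbs inequality) route rather than a formal Lagrange-multiplier computation, since it yields a fully rigorous \emph{global} maximum under minimal regularity. Suppose $p^*(x) = \exp(a_0 + \sum_{i=1}^n a_i h_i(x))$ is a density of the claimed form \eqref{entform} satisfying the normalization constraint $\int_a^b p^* \, dx = 1$ together with the moment constraints $\int_a^b h_i p^* \, dx = g_i$; the existence of suitable constants $a_0,\dots,a_n$ is exactly the hypothesis that a density of this form meeting the constraints is available. Let $q$ be any competing density on $(a,b)$ satisfying the same constraints \eqref{his}. The goal is to show $H(q) \le H(p^*)$, where $H(\cdot)$ is the entropy $-\int p \log p \, dx$.

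The central tool is the non-negativity of the Kullback--Leibler divergence $D(q \| p^*) = \int_a^b q \log(q/p^*) \, dx \ge 0$, which I would obtain from the elementary inequality $\log t \le t-1$ (equivalently, Jensen applied to the strictly convex $t \mapsto -\log t$), with equality iff $q = p^*$ almost everywhere. Expanding gives $D(q\|p^*) = -H(q) - \int_a^b q \log p^* \, dx$. The key step, and the reason the exponential form is singled out, is that $\log p^* = a_0 + \sum_i a_i h_i$ is an affine combination of precisely the constrained functions, so that $\int_a^b q \log p^* \, dx = a_0 \int_a^b q \, dx + \sum_i a_i \int_a^b h_i q \, dx = a_0 + \sum_i a_i g_i$ depends on $q$ only through the fixed constraint values. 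Running the same computation with $q$ replaced by $p^*$ gives $\int_a^b p^* \log p^* \, dx = a_0 + \sum_i a_i g_i = -H(p^*)$, so the two integrals coincide. Substituting back yields $0 \le D(q\|p^*) = H(p^*) - H(q)$, the desired inequality, with equality exactly when $q = p^*$.

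The hard part will not be the algebra above but the technical bookkeeping needed to keep every integral finite and well defined: one must assume each $h_i$ is integrable against both $q$ and $p^*$ so that $\int h_i q$ and $\int h_i p^*$ converge, and one should restrict to competitors of finite entropy to avoid $\infty - \infty$ ambiguities in $D(q\|p^*)$. I would also flag the existence question, namely whether constants $a_i$ rendering $p^*$ compatible with prescribed $g_i$ actually exist, as genuinely subtle (it is the solvability of a convex dual problem); but since the theorem presupposes a density of the form \eqref{entform} satisfying the constraints, I would take existence as given and prove only optimality. For contrast I would note that a Lagrange-multiplier derivation, setting the functional derivative of $-\int p \log p + \lambda_0(\int p - 1) + \sum_i \lambda_i(\int h_i p - g_i)$ to zero, recovers the same exponential form but only certifies a stationary point; the Gibbs-inequality argument is preferable because the concavity of entropy then upgrades this immediately to a global maximum.
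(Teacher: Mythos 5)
Your proof is correct, but it takes a genuinely different route from the paper. The paper's argument is the variational one: it applies Lagrange multipliers to the entropy functional subject to the constraints, arrives at the stationarity condition $-\log p + \sum_i a_i h_i \equiv 0$ (whence the exponential form), and defers the full details to Kagan et al. You instead take the exponential-form density $p^*$ as given and prove \emph{global} optimality directly via the Gibbs inequality: for any competitor $q$ meeting the same moment constraints, $0 \le D(q\|p^*) = -H(q) - \int q\log p^*$, and because $\log p^*$ is an affine combination of the constrained functions $h_i$, the cross term $\int q\log p^*$ equals $a_0 + \sum_i a_i g_i = -H(p^*)$ independently of $q$, giving $H(q)\le H(p^*)$ with equality iff $q=p^*$ a.e. This buys something the paper's sketch does not: the Lagrange computation only certifies a critical point (the paper relies on the reference, or implicitly on concavity of entropy, to upgrade it), whereas your argument is self-contained and delivers the global maximum plus the uniqueness statement in one stroke. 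The trade-off is that your route proves only the optimality direction and must assume a density of the form \eqref{entform} satisfying the constraints exists — a point you correctly flag as the genuinely delicate dual-solvability question, which neither proof resolves. Your attention to the integrability bookkeeping (finiteness of $\int h_i q$, finite-entropy competitors, strict positivity of $p^*$ on $(a,b)$ so that $D(q\|p^*)$ is well defined) is appropriate and goes beyond what the paper records.
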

\noindent The result follows from the use of Lagrange multipliers to find $-\log(p) + \sum_{i=1}^na_ih_i \equiv 0$.
The full proof can be found in Kagan et al. \cite{KaganLinnikRao}.

We now see the power of this theorem. One observes that a Gaussian arises as the maximum entropy distribution where the support is the real line, and $h_1 = x, h_2 = x^2$ so that the first and second moment (or equivalently mean and variance) are specified. The exponent in (\ref{entform}) becomes a quadratic which yields the Gaussian density (we require $a_2>0$ but this is easily shown).

But instead specifying the first absolute moment through $h_1(x) = |x|$, so that $\int_\mathbb{R}|x|p(x)dx = g_1$
we find that the density $p(x)$ is of the form $e^{a_0+a_1|x|}$, a centralised Laplace density. So the Laplace distribution is the maximum entropy distribution when the first absolute moment is specified. 

However, it also arises as the maximum entropy distribution of a Gaussian variance-mixture with specified mean, and the specified mean of its random variance $V$. 
If we specify the mean of $V$, then from theorem (\ref{entthm}) we have
$p(v) = e^{a_0+a_1x}$
which is an exponential distribution. As we have seen, a Gaussian variance-mixture distribution with exponential variance is a Laplace distribution, and so the Laplace distribution arises as the maximum entropy distribution of a Gaussian variance-mixture with specified mean, and specified mean of some random variance $V$. 

This provides additional motivation for the initial consideration of the Laplace distribution.

\subsection{Brownian motion as a degenerate Variance Gamma process}\label{BStoVGAppendix}
Consider taking $\nu\downarrow0$ with in (\ref{VGc.f.}) 
\begin{align}
    \lim_{\nu\downarrow0}\psi_{X_t}(u) &= \lim_{\nu\downarrow0}\left(\dfrac{1}{1-i\theta\nu u + \frac{\sigma^2}{2}\nu u^2}\right)^\frac{t}{\nu} = \lim_{\nu\downarrow0}\left({1-\nu\left(i\theta u + \frac{\sigma^2}{2} u^2\right)}\right)^\frac{-t}{\nu} \\
    &= \lim_{R\uparrow\infty}\left({\left({1+\frac{-i\theta u + \frac{\sigma^2}{2} u^2}{R}}\right)^R}\right)^{-t} = \left(\exp\left(-i\theta u + \frac{\sigma^2}{2}u^2\right)\right)^{-t} = \exp\left(i\theta tu -\frac{1}{2}\sigma^2t u^2 \right).
\end{align}
which is indeed the characteristic function of a Brownian motion (\ref{bmcf}) (where we have used the complex exponential limit in the penultimate equality). Clearly this is continuous at $0$, and so we employ Lévy's continuity theorem (Theorem 15, section 14.7 of Fristedt and Gray \cite{Levycontinuity}) to conclude convergence in distribution to the Brownian motion. As discussed, this is intuitive as the Gamma process subordinating the time becomes degenerate with $0$ variance about the mean $t$, and there is no time-change for the Brownian motion in the limit $\nu\downarrow0$.

\subsection{Proof of remark \ref{hunique}}\label{huniqueappendix}
Let $H(h)$ be equal to inverse of the RHS of (\ref{hstar}) without the exponent $t/\nu$ so that
\begin{equation}
    H(h) = \dfrac{1-\nu\theta h-\nu\frac{
    \sigma^2}{2}h^2}{1-\nu\theta (h+1)-\nu\frac{
    \sigma^2}{2}(h+1)^2}.
\end{equation}
Then we find,
\begin{equation}
    \frac{dH}{dh}(h) = \dfrac{\frac{\nu^2\sigma^4}{2}h^2+\theta\nu^2(\frac{\sigma^2}{2}+\theta)h+\theta\nu^2(\frac{\sigma^2}{2}+\theta)+\nu\sigma^2}{\left(1-\nu\theta(h+1)-\frac{\nu\sigma^2}{2}(h+1)^2\right)^2},
\end{equation}
so that
\begin{equation}
    \frac{dH}{dh}(h) > 0, h_1<h<h_2-1,\hspace{1cm} \lim_{h\to h_1}g(h) = 0,\hspace{1cm}\lim_{h\to h_2-1^-}g(h) = +\infty
\end{equation}
so that $H$ is a bijection from $(h_1,h_2-1)$ to $\mathbb{R^+}$ and there is a unique $h^*$ which satisfies (\ref{hstar}).

\subsection{Proof of proposition \ref{EsscherVGVG}}\label{EsscherVGVGappendix}
Consider the inverse of the term inside the bracket in (\ref{Essmgf}) as suggested. Then the denominator becomes
\begin{equation}
    \dfrac{1-\nu\theta (h+z) - \nu\frac{1}{2}\sigma^2(h+z)^2}{1-\nu\theta h - \nu\frac{1}{2}\sigma^2h^2} = \dfrac{1-\nu\theta h-\nu\theta z - \nu \frac{\sigma^2}{2}(h^2+2hz+z^2)}{1-\nu\theta h - \nu\frac{1}{2}\sigma^2h^2}
\end{equation}
which we may expand as
\begin{equation}
    \dfrac{1-\nu\theta h- \nu \frac{\sigma^2}{2}h^2-\nu(\theta+h\sigma^2)z-\nu\frac{\sigma^2}{2}z^2}{1-\nu\theta h - \nu\frac{1}{2}\sigma^2h^2} = 1 - \dfrac{\nu(\theta+h\sigma^2)z - \nu\frac{\sigma^2}{2}z^2}{1-\nu\theta h - \nu\frac{1}{2}\sigma^2h^2}
\end{equation}
rewriting the fraction gives
\begin{equation}
    1 - \dfrac{\nu}{{1-\nu\theta h - \nu\frac{1}{2}\sigma^2h^2}}(\theta+h\sigma^2)z -\dfrac{\nu}{{1-\nu\theta h - \nu\frac{1}{2}\sigma^2h^2}}\left(\frac{\sigma^2}{2}z^2\right)
\end{equation}
so we can conclude by letting $\Tilde{\theta}, \Tilde{\nu}$ be as given in the proposition.

\end{document}